\def\Z{\mathbb Z}
\def\C{\mathbb C}
\def\1{{\bf{1}}}
\def\footnoterule{\kern 1mm \hrule width 7cm \kern 2.2mm}%
 \newtheorem{thm}{Theorem}[section]
 \newtheorem{prp}[thm]{Proposition}
 \newtheorem{lem}[thm]{Lemma}
 \newtheorem{cor}[thm]{Corollary}
\newcommand{\bea}{\begin{eqnarray}}
\newcommand{\eea}{\end{eqnarray}}
\newcommand{\be}{\begin{equation}}
\newcommand{\ee}{\end{equation}}
\begin{document}

\title{Universal Character, Phase Model and Topological Strings on $\C^3$}
\author{
\  \ Na Wang\dag, Chuanzhong Li\ddag\footnote{Corresponding author:lichuanzhong@nbu.edu.cn }\\
\dag\small Department of mathematics and statistics, Henan University, Kaifeng, 475001, China.\\
\ddag\small Department of Mathematics,  Ningbo University, Ningbo, 315211, China}

\date{}
\maketitle

\begin{abstract}
In this paper, we consider two different subjects: the algebra of universal characters $S_{[\lambda,\mu]}({\bf x},{\bf y})$ (a generalization of Schur functions) and the phase model of strongly correlated bosons. We find that the two-site generalized phase model can be realized in the algebra of universal characters, and the entries in the monodromy matrix of the phase model  can be represented by the vertex operators $\Gamma_i^\pm(z) (i=1,2)$ which generate universal characters. Meanwhile, we find that these vertex operators can also be used to obtain the A-model topological string partition function on $\C^3$.

\end{abstract}
\noindent
{\bf Keywords: }{universal character, vertex operator, phase model, topological strings.}

\section{Introduction}\label{sec1}
Symmetric functions were used to determine irreducible characters of highest weight representations of the classical groups\cite{weyl}. The universal character, as a generalization of Schur function, describes the character of an irreducible rational representation of $GL(n)$\cite{KK}, which upgrades that Schur function is the character of an irreducible polynomial representation of $GL(n)$. Symmetric functions also appear in mathematical physics, especially in integrable models. The group in the  Kyoto school uses Schur functions in a remarkable way to understand the KP and KdV hierarchies\cite{MJD}. T. Tsuda defined the UC hierarchy which is a generalization of KP hierarchy and obtained that the tau functions of UC hierarchy can be realized in terms of the universal characters. He also proved that the UC hierarchy has relations with Painlev\'e equations\cite{Tsuda} by similar reductions. In this paper, we consider two different subjects: the algebra of universal characters and the phase model of strongly correlated bosons.

One purpose of this paper is to give the representation of the two-site generalized phase model on the algebra of universal characters $S_{[\lambda,\mu]}({\bf x},{\bf y})$.  The phase model, which is the so-called crystal limit of the quantum group \cite{MK}, is an integrable model and can be solved in the formulism of the quantum inverse scattering method \cite{KBI}. Our results will show that the limit of the quantum inverse scattering method has an interpretation in terms of the algebra of universal characters. The crucial elements in our discussion are vertex operators $\Gamma_i^\pm(z),\ (i=1,2)$. The Fermions can be defined from these vertex operators. In the special case $\mu=\emptyset$, the universal character $S_{[\lambda,\mu]}({\bf x},{\bf y})$ will be reduced to the Schur function $S_{\lambda}({\bf x})$, and the correspondence between vertex operators and fermions in this special case is a part of the well-known Boson-Fermion correspondence.

The relation between the algebra of Schur functions and the phase model is known from \cite{PS,NVT}. There is the following isometry between states in the phase model and Schur functions
\[
\bigotimes_{i=0}^{M} |n_i\rangle_i\mapsto S_{\lambda}({\bf x}),\quad \lambda=1^{n_1}2^{n_2}\ldots.
\]
The actions of the entries in monodromy matrix $T(u)$ on Schur function are obtained from the truncated expansions of the vertex operators
\[
\tilde{\Gamma}^+(z)=e^{\xi({\bf x},z)},\quad \tilde{\Gamma}^-(z)=e^{\xi(\tilde{\partial}_{\bf x},z^{-1})}
\]
where $\xi({\bf x},z)=\sum_{n=1}^\infty x_n z^n$ and
$$\tilde{\partial}_{\bf x}=(\partial_{x_1}, \frac{1}{2}\partial_{x_2},\frac{1}{3}\partial_{x_3},\cdots), \quad \partial_{x_i}=\frac{\partial}{\partial_{x_i}}.$$ In this paper, we generalize vertex operators to the type
\bea
&&\Gamma_1^-(z)=e^{\xi({\bf x}-\tilde{\partial}_{\bf y}, z)},\quad \Gamma_1^+(z)=e^{\xi(\tilde{\partial}_{\bf x}, z^{-1})},\nonumber\\
&&\Gamma_2^-(z)=e^{\xi({\bf y}-\tilde{\partial}_{\bf x}, z)},\quad \Gamma_2^+(z)=e^{\xi(\tilde{\partial}_{\bf y}, z^{-1})},\nonumber
\eea
these vertex operators can generate universal characters and fermions can be defined from them. We define the map from the state in the phase model to the universal character by
\[
\bigotimes_{i=0}^{M_1} |n_i\rangle_i^{(1)}\bigotimes\bigotimes_{i=0}^{M_2} |m_i\rangle_i^{(2)}\mapsto S_{[\lambda,\mu]}({\bf x},{\bf y})
\]
with
\[
\lambda=1^{n_1}2^{n_2}\ldots,\quad \mu=1^{m_1}2^{m_2}\ldots.
\]
The actions of creation operators $B_1(u)$ and $B_2(u)$, which are entries in the monodromy matrix $T(u)$ of the phase model, on universal characters are determined by the truncated expansions of $\Gamma_1^-(z)$ and $\Gamma_2^-(z)$ respectively, and the annihilation operators $C_1(u)$ and $C_2(u)$ (multiplied by a coefficient) are adjoint to the operators $B_1(u)$ and $B_2(u)$, and determined by the truncated expansions of $\Gamma_1^+(z)$ and $\Gamma_2^+(z)$ respectively.

Another purpose of this paper is to discuss the relations between the vertex operators $\Gamma_i^-(z)$, $\Gamma_i^+(z)$ $(i=1,2)$ and the MacMahon functions. It is known that the A-model topological string partition function $Z_{\C^{3}}^{top}$ on $\C^3$ can be written as a Fermionic correlator involving the vertex operators $\tilde{\Gamma}^+(z)$ and $\tilde{\Gamma}^-(z)$ with a particular specialization of the values of $z=q^{\pm 1/2},q^{\pm 3/2},q^{\pm 5/2}\cdots$. In this paper, we will give that $Z_{\C^{3}}^{top}$ can also be obtained from the vertex operators $\Gamma_i^-(z)$ and $\Gamma_i^+(z)$ $(i=1,2)$  with the same specialization of the values of $z$.

The paper is organized as follows. In section \ref{sect2}, we recall the definition of universal character and its vertex operator realization, then we give the actions of the vertex operators on $S_{[\lambda,\mu]}({\bf x},{\bf y})$ which is helpful for our discussion. In section \ref{sect3}, we recall the phase model. In section \ref{sect4}, we define the representation of the two-site generalized phase model on the algebra of universal characters, and we find that the actions of the entries in monodromy matrix on universal characters are obtained from the truncated expansions of the vertex operators discussed in section \ref{sect2}. In section \ref{sect5}, we give that the MacMahon function can be obtained from these vertex operators.
\section{Universal characters and vertex operators}\label{sect2}
Let ${\bf x}=(x_1,x_2,\cdots)$ and ${\bf y}=(y_1,y_2,\cdots)$.
The operators $h_n({\bf x})$ are determined by the generating function:
\be\label{hxi}
\sum_{n=0}^\infty h_n({\bf x})z^n=e^{\xi({\bf x},z)},\quad \xi({\bf x},z)=\sum_{n=1}^\infty x_n z^n
\ee
and set $h_n({\bf x})=0$ for $n<0$. The operators $h_n({\bf x})$ can be explicitly written as
\[
h_n({\bf x})=\sum_{k_1+2k_2+\cdots nk_n=n}\frac{x_1^{k_1}x_2^{k_2}\cdots x_n^{k_n}}{k_1!k_2!\cdots k_n!}.
\]
Note that $h_n({\bf x})$ is the complete homogeneous symmetric function if we replace $ix_i$ with the power sum $p_i$.

For a pair of Young diagrams $\lambda=(\lambda_1,\lambda_2,\cdots,\lambda_l)$ and $\mu=(\mu_1,\mu_2,\cdots,\mu_{l'})$, the universal character $S_{[\lambda,\mu]}=S_{[\lambda,\mu]}({\bf x},{\bf y})$ is a polynomial of variables ${\bf x}$ and ${\bf y}$ in $\C[{\bf x},{\bf y}]$ defined by the twisted Jacobi-Trudi formula \cite{KK}:
\be
S_{[\lambda,\mu]}({\bf x},{\bf y})=\text{det}\left( \begin{array}{cc}
h_{\mu_{l'-i+1}+i-j}({\bf y}), & 1\leq i\leq l' \\
h_{\lambda_{i-l'}-i+j}({\bf x}), & l'+1\leq i\leq l+l'
\end{array} \right)_{1\leq i,j\leq l+l'}.
\ee
Define the degree of each variables $x_n, y_n, \ n=1,2,\cdots$ by
\[
\text{deg }x_n=n,\quad \text{deg }y_n=-n
\]
then $S_{[\lambda,\mu]}({\bf x},{\bf y})$ is a homogeneous polynomial of degree $|\lambda|-|\mu|$, where $|\lambda|=\lambda_1+\lambda_2+\cdots+\lambda_l$ is called the weight of $\lambda$. Note that $S_\lambda({\bf x})$ is a special case of the universal character: $S_\lambda({\bf x})=\det(h_{\lambda_i-i+j}({\bf x}))=S_{[\lambda,\emptyset]}({\bf x},{\bf y})$.

Introduce the following vertex operators
\bea
&&\Gamma_1^-(z)=e^{\xi({\bf x}-\tilde{\partial}_{\bf y}, z)},\quad \Gamma_1^+(z)=e^{\xi(\tilde{\partial}_{\bf x}, z^{-1})},\label{g1}\\
&&\Gamma_2^-(z)=e^{\xi({\bf y}-\tilde{\partial}_{\bf x}, z)},\quad \Gamma_2^+(z)=e^{\xi(\tilde{\partial}_{\bf y}, z^{-1})}.\label{g2}
\eea
Define
\bea
X^\pm(z)&=&\sum_{n\in\Z}X^\pm_nz^n=e^{\pm\xi({\bf x}-\tilde{\partial}_{\bf y}, z)}e^{\mp\xi(\tilde{\partial}_{\bf x}, z^{-1})},\\
Y^\pm(z^{-1})&=&\sum_{n\in\Z}Y^\pm_nz^{-n}=e^{\pm\xi({\bf y}-\tilde{\partial}_{\bf x}, z^{-1})}e^{\mp\xi(\tilde{\partial}_{\bf y}, z)}.
\eea
The operators $X_i^\pm$ satisfy the following Fermionic relations:
\bea
X_i^\pm X_j^\pm +X_{j-1}^\pm X_{i+1}^{\pm}&=&0,\nonumber\\
X_i^+X_j^-+X_{j+1}^{-}X_{i-1}^+&=&\delta_{i+j,0}.\nonumber
\eea
The same relations hold also for $Y_i^\pm$, and $X_i^\pm$ and $Y_i^\pm$ are commutative. The operators $X_i^+$ and $Y_i^+$ are raising operators for the universal characters such that
\begin{equation}
S_{[\lambda,\mu]}({\bf x},{\bf y})=X_{\lambda_1}^+\cdots X_{\lambda_l}^+Y_{\mu_1}^+\cdots X_{\mu_{l'}}^+\cdot 1
\end{equation}
where the Young diagrams $\lambda=(\lambda_1,\lambda_2,\cdots,\lambda_l)$ and $\mu=(\mu_1,\mu_2,\cdots,\mu_{l'})$.

For an unknown function $\tau=\tau({\bf x},{\bf y})$, the bilinear relations\cite{Tsuda}
\be
\sum_{i+j=-1}X_i^-\tau\otimes X_j^+\tau=\sum_{i+j=-1}Y_i^-\tau\otimes Y_j^+\tau=0
\ee
is called the UC hierarchy.

It turns out that $\tau({\bf x},{\bf y})$ equals
\[
\tau({\bf x},{\bf y})=\tau_1({\bf x}-\tilde{\partial}_{\bf y})\tau_2({\bf y}-\tilde{\partial}_{\bf x})\cdot 1
\]
where $\tau_1({\bf x})$ and $\tau_2({\bf x})$ are tau functions of KP hierarchy.
In special case, the universal character $S_{[\lambda,\mu]}({\bf x},{\bf y})$ is also the solution of UC hierarchy, and
\be
S_{[\lambda,\mu]}({\bf x},{\bf y})=S_\lambda({\bf x}-\tilde{\partial}_{\bf y})S_\mu({\bf y}-\tilde{\partial}_{\bf x})\cdot 1.
\ee
From this, we get the conclusion which is helpful for the following discussion.
\begin{prp}
The vertex operators in (\ref{g1}) and (\ref{g2}) act on the universal characters $S_{[\lambda,\mu]}({\bf x},{\bf y})$ in the following way,
\bea
\Gamma_1^-(z)S_{[\lambda,\mu]}({\bf x},{\bf y})&=&\sum_{n=0}^\infty z^n S_{[\lambda\cdot (n),\mu]}({\bf x},{\bf y})=\sum_{\nu\succ\lambda}z^{|\nu|-|\lambda|} S_{[\nu,\mu]}({\bf x},{\bf y}),\\
\Gamma_2^-(z)S_{[\lambda,\mu]}({\bf x},{\bf y})&=&\sum_{n=0}^\infty z^n S_{[\lambda,\mu\cdot (n)]}({\bf x},{\bf y})=\sum_{\nu\succ\mu}z^{|\nu|-|\mu|} S_{[\lambda,\nu]}({\bf x},{\bf y}),
\eea
where the multiplication of two Young diagram $\lambda$ and $(n)$ satisfies the Pieri formula\cite{Mac,FH}, and $\lambda\succ\mu$ means that the Young diagrams $\lambda$ and $\mu$ are interlaced, in the sense of $\lambda_1\geq\mu_1\geq\lambda_2\geq\mu_2\geq\cdots$. The operators $\Gamma_1^+(z)$ and $\Gamma_2^+(z)$ are adjoint to $\Gamma_1^-(z)$ and $\Gamma_2^-(z)$ respectively, that is,
\bea
\Gamma_1^+(z)S_{[\lambda,\mu]}({\bf x},{\bf y})&=&\sum_{n=0}^\infty z^n S_{[\lambda/ (n),\mu]}({\bf x},{\bf y})=\sum_{\lambda\succ\nu}z^{|\lambda|-|\nu|} S_{[\nu,\mu]}({\bf x},{\bf y}),\\
\Gamma_2^-(z)S_{[\lambda,\mu]}({\bf x},{\bf y})&=&\sum_{n=0}^\infty z^n S_{[\lambda,\mu/(n)]}({\bf x},{\bf y})=\sum_{\mu\succ\nu}z^{|\mu|-|\nu|} S_{[\lambda,\nu]}({\bf x},{\bf y}),
\eea
where $\lambda/\mu$ denotes a skew diagram.
\end{prp}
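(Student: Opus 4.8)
The plan is to deduce all four identities from the corresponding classical statements for ordinary Schur functions, using the factorization $S_{[\lambda,\mu]}({\bf x},{\bf y}) = S_\lambda({\bf x}-\tilde\partial_{\bf y})\,S_\mu({\bf y}-\tilde\partial_{\bf x})\cdot 1$ recorded just above the statement. I would abbreviate the two shifted alphabets by $\mathbf{w}=(w_1,w_2,\dots)$ with $w_n=x_n-\tfrac1n\partial_{y_n}$ and $\mathbf{v}=(v_1,v_2,\dots)$ with $v_n=y_n-\tfrac1n\partial_{x_n}$, so that $\Gamma_1^-(z)=e^{\xi(\mathbf{w},z)}$, $\Gamma_2^-(z)=e^{\xi(\mathbf{v},z)}$ and $S_{[\lambda,\mu]}=S_\lambda(\mathbf{w})S_\mu(\mathbf{v})\cdot1$. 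The first step is a commutator lemma, proved by direct computation: $[w_m,w_n]=[v_m,v_n]=0$ and, decisively, $[w_m,v_n]=\delta_{mn}(\tfrac1n-\tfrac1m)=0$, together with $[\partial_{x_k},w_m]=\delta_{km}$, $[\partial_{x_k},v_m]=0$, $[\partial_{y_k},w_m]=0$, $[\partial_{y_k},v_m]=\delta_{km}$. These say that $\mathbf w$ and $\mathbf v$ behave like two independent commuting families of indeterminates, and they dictate which Schur-operator factor each vertex operator may slide past.

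For $\Gamma_1^-(z)$ the conclusion is then immediate. Since the $w_n$ mutually commute, the assignment $u_n\mapsto w_n$ is an algebra homomorphism, so the classical Pieri/vertex-operator identity $e^{\xi(\mathbf u,z)}S_\lambda(\mathbf u)=\sum_n z^n S_{\lambda\cdot(n)}(\mathbf u)=\sum_{\nu\succ\lambda}z^{|\nu|-|\lambda|}S_\nu(\mathbf u)$, coming from $e^{\xi(\mathbf u,z)}=\sum_n h_n(\mathbf u)z^n$ and the Pieri rule, holds verbatim with $\mathbf u$ replaced by $\mathbf w$. Multiplying on the right by $S_\mu(\mathbf v)$, applying to $1$, and re-reading each term through the factorization formula turns $S_\nu(\mathbf w)S_\mu(\mathbf v)\cdot1$ back into $S_{[\nu,\mu]}$, which is exactly the first displayed formula. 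For $\Gamma_2^-(z)=e^{\xi(\mathbf v,z)}$ the only additional input is $[w_m,v_n]=0$: it lets me commute $e^{\xi(\mathbf v,z)}$ to the right past the factor $S_\lambda(\mathbf w)$, after which the same classical identity in the alphabet $\mathbf v$ produces $\sum_{\nu\succ\mu}z^{|\nu|-|\mu|}S_{[\lambda,\nu]}$.

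The adjoint operators are treated by the same scheme with ``shift'' in place of ``multiply''. Whenever $[\partial_{x_k},u_m]=\delta_{km}$ one has $e^{\sum_k a_k\partial_{x_k}}g(\mathbf u)=g(\mathbf u+\mathbf a)\,e^{\sum_k a_k\partial_{x_k}}$, because $[\sum_k a_k\partial_{x_k},u_m]=a_m$ is central; with $a_k=\tfrac1k z^{-k}$ this is the shift $\mathbf u\mapsto\mathbf u+[z^{-1}]$, where $[t]$ denotes the rank-one alphabet acting by $u_n\mapsto u_n+\tfrac1n t^n$. Applying this with $\mathbf u=\mathbf w$ (legitimate since $[\partial_{x_k},w_m]=\delta_{km}$) and using $[\partial_{x_k},v_m]=0$ together with $\Gamma_1^+(z)\cdot1=1$, I obtain $\Gamma_1^+(z)S_{[\lambda,\mu]}=S_\lambda(\mathbf w+[z^{-1}])\,S_\mu(\mathbf v)\cdot1$; the single-variable branching rule $S_\lambda(\mathbf u+[t])=\sum_{\lambda\succ\nu}t^{|\lambda|-|\nu|}S_\nu(\mathbf u)=\sum_n t^n S_{\lambda/(n)}(\mathbf u)$ with $t=z^{-1}$ then yields the skewing formula. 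The operator $\Gamma_2^+(z)$ (the fourth display, which should read $\Gamma_2^+$ rather than $\Gamma_2^-$) is identical after interchanging the roles of $\mathbf w,\mathbf v$ and of $\partial_{x_k},\partial_{y_k}$, and the resulting identities exhibit $\Gamma_i^+$ as the Hall-pairing adjoints of $\Gamma_i^-$.

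I expect the work to be bookkeeping rather than conceptual: the substance is the four mixed commutators, the discipline of tracking which factor each operator slides past or shifts, and matching powers of $z$. In particular the branching rule produces $z^{-(|\lambda|-|\nu|)}$ for the annihilation operators, reflecting that the genuine adjoint of $\Gamma_i^-(z)$ is $\Gamma_i^+(z^{-1})$; I would flag this sign of the exponent when comparing with the statement as written. Once the commutator lemma is in hand, each of the four identities is a one-line transcription of its classical Schur-function analogue, so no single step should present a serious obstacle.
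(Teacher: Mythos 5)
Your proof is correct and follows essentially the same route the paper intends: the paper offers no argument beyond the preceding factorization $S_{[\lambda,\mu]}({\bf x},{\bf y})=S_\lambda({\bf x}-\tilde{\partial}_{\bf y})S_\mu({\bf y}-\tilde{\partial}_{\bf x})\cdot 1$ (``From this, we get the conclusion''), and your commutator lemma is exactly the commutativity of $x_m-\frac{1}{m}\partial_{y_m}$ and $y_n-\frac{1}{n}\partial_{x_n}$ that the paper itself records as a lemma in Section \ref{sect4}, combined with the classical Pieri and single-variable branching rules. Your two flagged discrepancies are also genuine typos in the statement: the fourth display should read $\Gamma_2^+(z)$, and with the definitions (\ref{g1})--(\ref{g2}) the adjoint expansions carry $z^{-n}$ rather than $z^{n}$ (equivalently, the true adjoint of $\Gamma_i^-(z)$ is $\Gamma_i^+(z^{-1})$).
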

\section{Phase model}\label{sect3}
We begin this section with a bosonic system based on the following algebra\cite{PS,NVT}
\begin{equation}\label{nphi}
[N,\phi] = - \phi,\quad [N,\phi^{\dag}] = \phi^{\dag},\quad [\phi,\phi^{\dag}] = \pi,
\end{equation}
where $\pi=|0\rangle \langle 0|$ is the vacuum projection. The operator $\phi$ is one-sided isometry
$$
\phi \phi^{\dag} = 1,\qquad \phi^{\dag} \phi = 1-\pi.
$$
This algebra can be represented in the Fock space $\mathcal{F}$ consisting of $n$-particle states
$|n\rangle$, the operators $\phi$, $\phi^\dag$ and $N$ acting as the phase operators and the number of particles operator, respectively,
$$
\phi^{\dag}|n\rangle =|n+1\rangle,\quad \phi|n\rangle = |n-1\rangle, \quad \phi|0\rangle = 0, \quad
N|n\rangle = n|n\rangle,
$$
where $|0\rangle$ is the vacuum state, the special case $n=0$ of the $n$ particle state.

Let the tensor product
\be
\mathcal{F} = \bigotimes_{i=0}^M \mathcal{F}_i,
\ee
be $M+1$ copies of the Fock space. Denote by $\phi_i, \phi^{\dag}_i, N_i$ the operators that act as $\phi, \phi^{\dag}, N$ in (\ref{nphi}), respectively, in the $i$th space and identically in the other spaces.

The phase model is a model of a periodic chain with the hamiltonian \cite{BKT,BIK,bogo}
\be
H = -\frac{1}{2} \sum_{i=0}^M \big(\phi^{\dag}_i \phi_{i+1} + \phi_i \phi^{\dag}_{i+1} -2N_i \big).
\ee
Define the operator of the total number of particles by
$$
\hat{N} = \sum_{i=0}^M  N_i.
$$
Then the $N$-particle vectors in this space are of the form
\be
\bigotimes_{i=0}^M |n_i\rangle_i,\qquad \textrm{where} \ |n_i\rangle_i =
(\phi_i^{\dag})^{n_i}|0\rangle_i,\quad  N=\sum_{i=0}^M n_i,    \label{Npar}
\ee
the numbers $n_i$ are called the occupation numbers of the state (\ref{Npar}).

From \cite{KBI}, we know that the phase model is integrable. Introduce the $L$-matrix
$$
L_i(u) = \left( \begin{array}{cc}
u^{-1} & \phi^{\dag}_i \\
\phi_i & u
\end{array} \right), \qquad i=0,\ldots, M,
$$
where $u$ is a scalar parameter, here we treat $u$ as $uI$ with $I$ being the identity operator in $\mathcal F$. For every $i=0,\ldots, M$, the $L$-matrix satisfies the bilinear equation
\be
R(u,v)\big(L_i(u)\otimes L_i(v)  \big)  =  \big(L_i(v)\otimes L_i(u)  \big) R(u,v),
\ee
where $R$-matrix $R(u,v)$ is a $4\times4$ matrix given by
\be
R(u,v) = \left( \begin{array}{cccc}
f(v,u) & 0 & 0 & 0 \\
0 & g(v,u) & 1 & 0 \\
0 & 0 & g(v,u) & 0 \\
0 & 0 & 0 & f(v,u)
\end{array} \right),   \label{Rmatrix}
\ee
with
$$
f(v,u) = \frac{u^2}{u^2-v^2},\quad g(v,u) = \frac{uv}{u^2-v^2}.
$$

Define the monodromy matrix by
$$
T(u) = L_M(u) L_{M-1}(u)\cdots L_0(u),
$$
which gives the solution of the phase model. It also satisfies the bilinear equation
\be\label{rtt}
R(u,v)\big(T(u)\otimes T(v)  \big)  = \big(T(v)\otimes T(u)  \big) R(u,v).
\ee

Let
\be
T(u) =  \left( \begin{array}{cc}
A(u) & B(u) \\
C(u) & D(u)
\end{array} \right),    \label{ABCD}
\ee
we have
\be
\hat{N} B(u) = B(u) (\hat{N}+1), \qquad \hat{N} C(u) = C(u) (\hat{N}-1).   \label{creat-annih}
\ee
Therefore, we call $B(u)$ the creation operator and $C(u)$ the annihilation operator. The  operators $A(u)$ and $D(u)$ do not change the total number of particles.

Denote by $|0\rangle_j$ the vacuum vector in $\mathcal F_j$ and by $|0\rangle=\otimes_{i=0}^M |0\rangle_i$. Let
\be
|\Psi(u_1,\ldots,u_N) \rangle = \prod_{i=1}^N B(u_j) |0\rangle,   \label{Psi}
\ee
which is a $N$ particle state.

According to \cite{PS,NVT}, there is the following isometry between the states (\ref{Npar}) and the Schur functions
\begin{equation}
\bigotimes_{i=0}^{M} |n_i\rangle_i\mapsto S_{\lambda}({\bf x}),\quad \lambda=1^{n_1}2^{n_2}\ldots,
\end{equation}
and the operator $B(u)$ acts on Schur functions as the operator of multiplication by $u^MH_M(u^2)$, where $H_M(t)=\sum_{k=0}^Mt^kh_k$ is the truncated generating function of the complete homogeneous symmetric functions $h_k$. Then the state $|\Psi(u_1,\ldots,u_N) \rangle$ has the following expansion
\begin{equation}
|\Psi(u_1,\ldots,u_N) \rangle=\sum_\lambda S_\lambda(u_1^2,\cdots,u_N^2)\bigotimes_{i=0}^M |n_i\rangle_i,\quad \lambda=1^{n_1}2^{n_2}\ldots.
\end{equation}

In the following, we will generalize this work \cite{PS,NVT} to realize the phase model in the algebra of universal characters, and the entries in the monodromy matrix can be obtained from the vertex operators which generate the universal characters.
\section{Two-site generalized phase model and universal characters}\label{sect4}
Now fix two positive integers $M_1,M_2$ and consider the tensor products
\be
\mathcal{F}^{(1)} = \bigotimes_{i=0}^M \mathcal{F}_i^{(1)}\quad\quad \mathcal{F}^{(2)} = \bigotimes_{i=0}^M \mathcal{F}_i^{(2)},
\ee
which $\mathcal{F}^{(1)}$ and $\mathcal{F}^{(2)}$ are $M_1+1$ and $M_2+1$ copies of the Fock space respectively. Denote by $\phi_i^{(1)}, \phi^{(1)\dag}_i, N_i^{(1)}$ the operators that act as $\phi, \phi^{\dag}, N$ in (\ref{nphi}), respectively, in the $i$th space $\mathcal{F}_i^{(1)}$ and identically in the other spaces of $\mathcal{F}^{(1)}$ and in all spaces of $\mathcal{F}^{(2)}$, and denote by $\phi_i^{(2)}, \phi^{(2)\dag}_i, N_i^{(2)}$ the operators that act as $\phi, \phi^{\dag}, N$, respectively, in the $i$th space $\mathcal{F}_i^{(2)}$ and identically in the other spaces of $\mathcal{F}^{(2)}$ and in all spaces of $\mathcal{F}^{(1)}$.
Let
\be
\mathcal{F}=\mathcal{F}^{(1)}\bigotimes\mathcal{F}^{(2)}
\ee
so that $\mathcal{F}$ is $M_1+M_2+2$ copies of the Fock space. Denote by $|0\rangle_j^{(i)}$ the vacuum vector in $\mathcal F_j^{(i)}$ with $i=1,2.$

Define the operators
$$
\hat{N}_1 = \sum_{i=0}^M  N_i^{(1)}, \quad \hat{N}_2 = \sum_{i=0}^M  N_i^{(2)},\quad \text{and} \quad \hat N=\hat{N}_1+\hat{N}_2.
$$
The $(N_1,N_2)$-particle vectors in space $\mathcal{F}$ are of the form
\be\label{nmbasis}
\bigotimes_{i=0}^{M_1} |n_i\rangle_i^{(1)}\bigotimes\bigotimes_{i=0}^{M_2} |m_i\rangle_i^{(2)},\qquad \textrm{with}\quad N_1=\sum_{i=0}^{M_1} n_i,\quad N_2=\sum_{i=0}^{M_2} m_i,
\ee
where
\[
|n_i\rangle_i^{(1)}= (\phi_i^{(1)\dag})^{(n_i)}|0\rangle_j^{(1)},\quad |m_i\rangle_i^{(2)}= (\phi_i^{(2)\dag})^{(m_i)}|0\rangle_j^{(2)}.
\]

Define the map
$\jmath: \mathcal F\rightarrow \C[{\bf x}, {\bf y}]$
by
\be\label{jmath}
\jmath(\bigotimes_{i=0}^{M_1} |n_i\rangle_i^{(1)}\bigotimes\bigotimes_{i=0}^{M_2} |m_i\rangle_i^{(2)})= S_{[\lambda,\mu]}({\bf x},{\bf y})
\ee
with
\be
\lambda=1^{n_1}2^{n_2}\ldots,\quad \mu=1^{m_1}2^{m_2}\ldots.
\ee
In fact, this association is not quite unique: partitions $\lambda,\ \mu$ themselves do not know about numbers $n_0$ and $m_0$ of particles. Nonetheless, if we fix the total numbers of particles $N_1$ and $N_2$, we can deduce $n_0=N_1-l(\lambda)$ and $m_0=N_2-l(\mu)$, where $l(\lambda)$ is the length of partition $\lambda$, that is, the number of rows in $\lambda$.
Note that the correspondence (7) in \cite{NVT} is a special case of the map $\jmath$ defined above.

The $L$-matrices are
$$
L^{(1)}_i(u) = \left( \begin{array}{cc}
u^{-1} & \phi^{(1)\dag}_i \\
\phi_i^{(1)} & u
\end{array} \right), \qquad i=0,\ldots, M_1,
$$
$$
L^{(2)}_i(u) = \left( \begin{array}{cc}
u^{-1} & \phi^{(2)\dag}_i \\
\phi_i^{(2)} & u
\end{array} \right), \qquad i=0,\ldots, M_2,
$$
and the monodromy matrix is
\bea
T(u)&=&L^{(2)}_{M_2}(u) \cdots L^{(2)}_0(u)L^{(1)}_{M_1}(u) \cdots L^{(1)}_0(u).\nonumber
\eea
Each $L$-matrix, as well as the monodromy matrix, satisfies the bilinear equation again
\bea
R(u,v)\big(L^{(j)}_i(u)\otimes L^{(j)}_i(v)  \big) & = & \big(L^{(j)}_i(v)\otimes L^{(j)}_i(u)  \big) R(u,v),\quad j=1,2,\nonumber \\
R(u,v)\big(T(u)\otimes T(v)  \big) & = & \big(T(v)\otimes T(u)  \big) R(u,v), \label{intertwine}
\eea
with the same $R$-matrix in (\ref{Rmatrix}).

Let
\bea
T_i(u)&=&L^{(i)}_{M_i}(u)L^{(i)}_{M_i-1}(u) \cdots L^{(1)}_0(u)\\
&=&\left( \begin{array}{cc}
A_i(u) & B_i(u) \\
C_i(u) & D_i(u)
\end{array} \right),\quad i=1,2.
\eea
The operators $B_1(u)$ and $B_2(u)$ are called the creation operators and $C_1(u)$ and $C_2(u)$ the annihilation operators, in the sense that they increase and decrease the total numbers of particles
\be
\hat{N}_i B_i(u) = B_i(u) (\hat{N}_i+1), \qquad \hat{N}_i C_i(u) = C_i(u) (\hat{N}_i-1) \quad\text{for}\quad i=1,2.
\ee
The operators $A_i(u)$ and $D_i(u)\ (i=1,2)$   do not change the total number of particles.

We call $i$th Fock space $\mathcal F_i^{(j)}\ (j=1,2)$ the $i$-energy space. Note that the correspondence (\ref{jmath}) does not take into account the numbers $n_0,\ m_0$ of zero-energy particles, Therefore (\ref{jmath}) gives a representation of the positive-energy space
\[
\hat {\mathcal F}=\bigotimes_{i=1}^M \mathcal{F}_i^{(1)}\bigotimes \bigotimes_{i=1}^M \mathcal{F}_i^{(2)},
\]
in the algebra of symmetric functions $\C[{\bf x},{\bf y}]$, in fact in its subspace $\C_{M_1,M_2}[{\bf x},{\bf y}]$ generated by universal characters $S_{[\lambda,\mu]}$ where the Young diagrams $\lambda$ have at most $M_1$ columns and $\mu$ at most $M_2$ columns. From the definition of the twisted Jacobi-Trudi formula, we can define this subspace by supposing
\[
h_{M_1+1}({\bf x})=h_{M_1+2}({\bf x})=\cdots=0,\quad h_{M_2+1}({\bf y})=h_{M_2+2}({\bf y})=\cdots=0.
\]

By the definition of $\hat N_1,\ \hat N_2$, the space $\mathcal F$ has a decomposition into $(N_1,N_2)$-particle subspaces $\mathcal F^{N_1,N_2}$, i.e.,
\be
\mathcal F=\bigoplus_{N_1,N_2\geq 0}\mathcal F^{N_1,N_2}=\bigoplus_{N_1,N_2\geq 0}\mathcal F_1^{N_1}\otimes\mathcal F_2^{N_2}.
\ee
Under the map (\ref{jmath}), the subspace $\mathcal F^{N_1,N_2}$ corresponds to $\C_{M_1,M_2}^{N_1,N_2}[{\bf x},{\bf y}]$ which is spanned by universal characters $S_{[\lambda,\mu]}$ whose diagrams $\lambda$ lie in the $N_1\times M_1$ box and $\mu$ lie in the $N_2\times M_2$ box. That Young diagram $\lambda$ lies in the $N\times M$ box means $\lambda$ has at most $N$ rows and at most $M$ columns.

Define the projection $P:\mathcal F\rightarrow \hat{\mathcal F}$ by forgetting the zero energy states, and define the operator $\mathcal{B}_i(u):=PB_i(u)P$ for $i=1,2$. Then $\mathcal{B}_i(u)$ are operators acting on the space $\hat{\mathcal F}\cong\C_{M_1,M_2}[{\bf x},{\bf y}]$. Since $B_i(u)$ are creation operators, then $\mathcal{B}_1(u)$ sends $\C_{M_1,M_2}^{N_1,N_2}[{\bf x},{\bf y}]$ to $\C_{M_1,M_2}^{N_1+1,N_2}[{\bf x},{\bf y}]$ and $\mathcal{B}_2(u)$ sends $\C_{M_1,M_2}^{N_1,N_2}[{\bf x},{\bf y}]$ to $\C_{M_1,M_2}^{N_1,N_2+1}[{\bf x},{\bf y}]$. In the following, we discuss the actions of $\mathcal{B}_i(u)$ on $\C_{M_1,M_2}[{\bf x},{\bf y}]$.
Define $\tilde{\mathcal{B}}_i(u)$ by $\mathcal{B}_i(u)=u^{-M_i}\tilde{\mathcal{B}}_i(u)$ for $i=1,2$.
Then we can derive the following proposition.

\begin{prp}\label{prpbb}
In the space $\C_{M_1,M_2}[{\bf x},{\bf y}]$,
\bea
\tilde{\mathcal{B}}_1(u)&=& H_{M_1}({\bf x}-\tilde{\partial}_{\bf y},u^2),\\
\tilde{\mathcal{B}}_2(u)&=& H_{M_2}({\bf y}-\tilde{\partial}_{\bf x},u^2),
\eea
where $H_n({\bf x},t)=\sum_{k=0}^n t^kh_k({\bf x})$, and $h_k({\bf x})$ is defined in (\ref{hxi}),  which in fact is the complete homogeneous symmetric function.
\end{prp}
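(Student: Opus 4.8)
The plan is to reduce the claim to a single-chain computation and then translate it through the map $\jmath$ using the Pieri-type action recorded in the Proposition of Section~\ref{sect2}. The first observation is that $T_1(u)=L^{(1)}_{M_1}(u)\cdots L^{(1)}_0(u)$ is built only from the operators $\phi^{(1)}_i,\phi^{(1)\dag}_i$, so its entry $B_1(u)$ acts nontrivially only on the factor $\mathcal F^{(1)}$ and as the identity on $\mathcal F^{(2)}$. Hence, under $\jmath$, $B_1(u)$ induces an operation on the diagram $\lambda$ alone, leaving $\mu$ untouched; this is what will eventually let the ${\bf y}$-dependence of $H_{M_1}({\bf x}-\tilde{\partial}_{\bf y},u^2)$ come out for free from the Section~\ref{sect2} Proposition. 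I would expand $B_1(u)=(L^{(1)}_{M_1}\cdots L^{(1)}_0)_{12}$ as a sum over index paths: writing the intermediate $\{1,2\}$-indices on the $M_1+2$ bonds as $\epsilon_p\in\{0,1\}$ for $p=0,\dots,M_1+1$ with the fixed boundary $\epsilon_{M_1+1}=0$ and $\epsilon_0=1$, each path contributes a site-wise product in which site $p$ carries $u^{-1}$, $u$, $\phi^{(1)\dag}_p$ or $\phi^{(1)}_p$ according to whether $(\epsilon_{p+1},\epsilon_p)$ equals $(0,0),(1,1),(0,1),(1,0)$.

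The core of the proof is then two bookkeeping identities for such a path. First, the power of $u$ contributed by site $p$ is uniformly $\epsilon_{p+1}+\epsilon_p-1$ (one checks this against all four cases), so summing over $p=0,\dots,M_1$ and telescoping gives total $u$-degree $2E-M_1-2$ with $E=\sum_{p=0}^{M_1+1}\epsilon_p$. Second, acting on an occupation state the path changes the occupation numbers by $n_p\mapsto n_p+\epsilon_p-\epsilon_{p+1}$; passing to the conjugate coordinates $\lambda'_j=\sum_{p\ge j}n_p$ this telescopes to $\nu'_j=\lambda'_j+\epsilon_j$ for $1\le j\le M_1$. The last identity is the crucial one: it says precisely that $\nu/\lambda$ is a horizontal strip (each column grows by $0$ or $1$), and it exhibits a bijection between the admissible paths (those with $n_p'\ge0$, equivalently $\nu'$ weakly decreasing) and the interlacing diagrams $\nu\succ\lambda$ with at most $M_1$ columns. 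Combining the two identities, $E=1+(|\nu|-|\lambda|)$, so the $u$-degree of the path equals $2(|\nu|-|\lambda|)-M_1$; after multiplying by $u^{M_1}$ to pass from $\mathcal B_1(u)$ to $\tilde{\mathcal B}_1(u)$, each strip of size $k$ is weighted by exactly $u^{2k}$. I would also note here that site $0$ is only ever a creation or an idle step (never an annihilation, since $\epsilon_0=1$), so the projection $P$ that discards the zero-energy occupation $n_0$ is harmless and $\tilde{\mathcal B}_1(u)$ is well defined on $\hat{\mathcal F}\cong\C_{M_1,M_2}[{\bf x},{\bf y}]$.

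Putting these together yields $\tilde{\mathcal B}_1(u)\,S_{[\lambda,\mu]}=\sum_{\nu\succ\lambda}u^{2(|\nu|-|\lambda|)}S_{[\nu,\mu]}$, the sum being over diagrams with at most $M_1$ columns. Grouping by $k=|\nu|-|\lambda|$ and invoking the Proposition of Section~\ref{sect2}, which identifies $\sum_{\nu\succ\lambda,\,|\nu|-|\lambda|=k}S_{[\nu,\mu]}$ with $h_k({\bf x}-\tilde{\partial}_{\bf y})S_{[\lambda,\mu]}$ (the degree-$k$ part of $\Gamma_1^-$), turns this into $\sum_{k=0}^{M_1}u^{2k}h_k({\bf x}-\tilde{\partial}_{\bf y})S_{[\lambda,\mu]}=H_{M_1}({\bf x}-\tilde{\partial}_{\bf y},u^2)S_{[\lambda,\mu]}$, where the truncation at $M_1$ is automatic because a horizontal strip inside a diagram with at most $M_1$ columns adds at most $M_1$ boxes. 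The statement for $\tilde{\mathcal B}_2(u)$ follows by the identical argument with $\mathcal F^{(1)},{\bf x},\lambda$ replaced by $\mathcal F^{(2)},{\bf y},\mu$ and $\Gamma_1^-$ by $\Gamma_2^-$. The main obstacle is the second bookkeeping identity and the attendant bijection: getting the conjugate-coordinate telescoping $\nu'_j=\lambda'_j+\epsilon_j$ exactly right (including the boundary conditions on $\epsilon$ and the $n_p'\ge0$ admissibility) is what forces both the horizontal-strip Pieri rule and, simultaneously, the precise power $u^{2k}$ and the cutoff at $M_1$.
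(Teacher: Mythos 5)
Your proposal is correct, but it takes a genuinely different route from the paper's own proof. The paper never expands the monodromy matrix: it quotes from \cite{PS,NVT} the one-chain fact that $\tilde{\mathcal B}(u)$ acts on Schur functions as multiplication by $\sum_{k=0}^{M}u^{2k}h_k$, and transfers this to the two-site setting through the factorization $S_{[\lambda,\mu]}({\bf x},{\bf y})=S_\lambda({\bf x}-\tilde{\partial}_{\bf y})S_\mu({\bf y}-\tilde{\partial}_{\bf x})\cdot 1$, substituting ${\bf x}\mapsto{\bf x}-\tilde{\partial}_{\bf y}$ (resp.\ ${\bf y}\mapsto{\bf y}-\tilde{\partial}_{\bf x}$); the only technical input is the preparatory lemma that $h_k({\bf y}-\tilde{\partial}_{\bf x})$ commutes with $S_\lambda({\bf x}-\tilde{\partial}_{\bf y})$, which is needed to pull the multiplier to the front in the $\tilde{\mathcal B}_2$ case. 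You instead re-derive the one-chain result from first principles --- the path expansion of $B_1(u)$ over $\{0,1\}$-bond indices, the two telescoping identities (total $u$-degree $2E-M_1-2$, and $\nu_j'=\lambda_j'+\epsilon_j$ in conjugate coordinates), and the resulting bijection between admissible paths and horizontal strips --- which is essentially the argument underlying \cite{NVT} that the paper treats as a black box; you then convert the strip sum into $h_k({\bf x}-\tilde{\partial}_{\bf y})$ via the Pieri-type action of $\Gamma_1^-$ from Section~\ref{sect2}. Your bookkeeping checks out (all four cases of the $u$-exponent $\epsilon_{p+1}+\epsilon_p-1$, the weight $u^{2k}$ after renormalizing by $u^{M_1}$, and the observation that site $0$ never annihilates since $\epsilon_0=1$, so the zero-energy projection $P$ is harmless), and your route sidesteps the paper's commutativity lemma entirely, since the Section~\ref{sect2} proposition already gives the action of $h_k({\bf y}-\tilde{\partial}_{\bf x})$ directly on $S_{[\lambda,\mu]}$. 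What each approach buys: yours is self-contained and makes transparent exactly where the power $u^{2k}$ and the cutoff at $M_1$ come from, while the paper's is far shorter at the cost of resting on the cited literature. One small point to make explicit in your write-up: besides your remark that a horizontal strip inside a diagram with at most $M_1$ columns has at most $M_1$ boxes, you should also note the converse matching --- the Pieri sum for $h_k$ with $k\le M_1$ can produce $\nu$ with $\nu_1>M_1$, and these terms vanish in $\C_{M_1,M_2}[{\bf x},{\bf y}]$ because the specialization $h_{M_1+1}({\bf x})=h_{M_1+2}({\bf x})=\cdots=0$ kills the corresponding twisted Jacobi--Trudi determinants; this is what makes the two sides agree term by term in the truncated space.
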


To prove this proposition, we need the following lemma.
\begin{lem} For any Schur functions $S_\lambda$ and complete symmetric function $h_k$, we have
\be
h_k({\bf y}-\tilde{\partial}_{\bf x})S_\lambda({\bf x}-\tilde{\partial}_{\bf y})=S_\lambda({\bf x}-\tilde{\partial}_{\bf y})h_k({\bf y}-\tilde{\partial}_{\bf x})
\ee
\end{lem}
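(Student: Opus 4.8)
The plan is to reduce the claimed identity, which asserts that two differential operators commute, to the commutativity of the elementary operators out of which each side is built. Write $a_n := x_n - \tfrac1n\partial_{y_n}$ for the $n$-th entry of ${\bf x}-\tilde{\partial}_{\bf y}$ and $b_m := y_m - \tfrac1m\partial_{x_m}$ for the $m$-th entry of ${\bf y}-\tilde{\partial}_{\bf x}$. Then $S_\lambda({\bf x}-\tilde{\partial}_{\bf y})$ is a polynomial in the operators $a_1,a_2,\ldots$ and $h_k({\bf y}-\tilde{\partial}_{\bf x})$ is a polynomial in the operators $b_1,b_2,\ldots$. (These substitutions are well defined because the $a_n$ commute among themselves and the $b_m$ commute among themselves, the relevant brackets being built from $[x_n,x_{n'}]$-type and $[\partial_{y_n},\partial_{y_{n'}}]$-type terms that all vanish.) Consequently it suffices to prove the single family of relations $[a_n,b_m]=0$ for all $n,m$; any polynomial in the $a_n$ then commutes with any polynomial in the $b_m$, which is exactly the assertion of the lemma.

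The second step is to compute $[a_n,b_m]$ directly from the canonical commutation relations $[x_n,\partial_{x_m}]=-\delta_{nm}$ and $[\partial_{y_n},y_m]=\delta_{nm}$, together with $[x_n,y_m]=0$ and $[\partial_{y_n},\partial_{x_m}]=0$. Expanding the bracket, the pure-variable term $[x_n,y_m]$ and the pure-derivative term $\tfrac{1}{nm}[\partial_{y_n},\partial_{x_m}]$ both drop out, leaving
\[
[a_n,b_m]=-\tfrac1m[x_n,\partial_{x_m}]-\tfrac1n[\partial_{y_n},y_m]=\tfrac1m\delta_{nm}-\tfrac1n\delta_{nm}.
\]
These surviving terms are nonzero only when $n=m$, in which case $\tfrac1m\delta_{nm}=\tfrac1n\delta_{nm}$ and they cancel identically. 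Hence $[a_n,b_m]=0$ for every $n$ and $m$, completing the proof.

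The computation itself is routine, so the real content is the exact cancellation in the last display, and here the weights $\tfrac1n$ concealed in the tilded derivatives $\tilde{\partial}_{\bf x}$, $\tilde{\partial}_{\bf y}$ are indispensable: the contribution from pairing $x_n$ against $\partial_{x_m}$ is matched by the contribution from pairing $y_m$ against $\partial_{y_n}$ precisely because $a_n$ and $b_m$ carry the same normalization $\tfrac1n\delta_{nm}$. With unweighted derivatives the bracket would not vanish, so I expect the only thing to watch is the bookkeeping of these factors of $\tfrac1n$ rather than any genuine conceptual difficulty.
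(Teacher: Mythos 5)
Your proof is correct and takes essentially the same route as the paper: the paper's one-line argument invokes exactly the fact you establish, namely that the operators $y_n-\frac{1}{n}\partial_{x_n}$ and $x_m-\frac{1}{m}\partial_{y_m}$ commute. You have merely made explicit the commutator bookkeeping $[a_n,b_m]=\frac{1}{m}\delta_{nm}-\frac{1}{n}\delta_{nm}=0$ and the reduction from polynomial expressions to these generators, both of which the paper leaves implicit.
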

\begin{proof}
this holds since $y_n-\frac{1}{n}\partial_{x_n}$ and $x_m-\frac{1}{m}\partial_{y_m}$ are commutative.
\end{proof}
{\it The proof of Proposition \ref{prpbb}}.
\begin{proof}
We know that
\[
S_{[\lambda,\mu]}({\bf x},{\bf y})=S_{\lambda}({\bf x}-\tilde{\partial}_{\bf y})S_\mu({\bf y}-\tilde{\partial}_{\bf x})\cdot 1,
\]
and in \cite{PS,NVT}
\[
\tilde{\mathcal B}(u)\bigotimes_{i=0}^M |n_i\rangle_i=\sum_{k=0}^Mu^{2k}h_k S_\lambda, \quad \lambda=1^{n_1}2^{n_2}\ldots.
\]
Here the mapping sign $\jmath$ is omitted at the left of the equation and we will do the same in the following.
Hence,
\bea
\tilde{\mathcal B}_1(u)\bigotimes_{i=0}^{M_1} |n_i\rangle_i^{(1)}\bigotimes\bigotimes_{i=0}^{M_2} |m_i\rangle_i^{(2)}&=&\big(\tilde{\mathcal B}_1(u)\bigotimes_{i=0}^{M_1} |n_i\rangle_i^{(1)}\big)\bigotimes\bigotimes_{i=0}^{M_2} |m_i\rangle_i^{(2)}\nonumber\\
&=& \sum_{k=0}^{M_1}u^{2k}h_k({\bf x}-\tilde{\partial}_{\bf y}) S_\lambda({\bf x}-\tilde{\partial}_{\bf y})S_\mu({\bf y}-\tilde{\partial}_{\bf x})\cdot 1,\nonumber
\eea
and
\bea
\tilde{\mathcal B}_2(u)\bigotimes_{i=0}^{M_1} |n_i\rangle_i^{(1)}\bigotimes\bigotimes_{i=0}^{M_2} |m_i\rangle_i^{(2)}&=&\bigotimes_{i=0}^{M_1} |n_i\rangle_i^{(1)}\bigotimes\big(\tilde{\mathcal B}_2(u)\bigotimes_{i=0}^{M_2} |m_i\rangle_i^{(2)}\big)\nonumber\\
&=&S_\lambda({\bf x}-\tilde{\partial}_{\bf y})\big(\sum_{k=0}^{M_2}u^{2k}h_k({\bf y}-\tilde{\partial}_{\bf x})S_\mu({\bf y}-\tilde{\partial}_{\bf x})\big)\cdot 1\nonumber\\
&=& \sum_{k=0}^{M_2}u^{2k}h_k({\bf y}-\tilde{\partial}_{\bf x}) S_\lambda({\bf x}-\tilde{\partial}_{\bf y})S_\mu({\bf y}-\tilde{\partial}_{\bf x})\cdot 1.\nonumber
\eea
\end{proof}

Note that the truncated generating function $H_n({\bf x},t)=\sum_{k=0}^n t^kh_k({\bf x})$ can also be regarded as the full generating function $H({\bf x},t)=\sum_{k=0}^\infty t^kh_k({\bf x})$ under the specialization: $H_n({\bf x},t)=H({\bf x},t)|_{h_{n+1}({\bf x})=h_{n+2}({\bf x})=\ldots=0}$. Then we get
the following corollary.
\begin{cor}
In the $M_1,M_2\rightarrow \infty$ limit, the actions of the creation operators $\tilde{\mathcal B}_1(u)$ and $\tilde{\mathcal B}_2(u)$ on the space $\hat{\mathcal F}$ correspond to the multiplications of $H({\bf x}-\tilde{\partial}_{\bf y},u^2)$ and $H({\bf y}-\tilde{\partial}_{\bf x},u^2)$ on the space $\C[{\bf x},{\bf y}]$ respectively.
\end{cor}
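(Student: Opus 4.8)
The plan is to read the corollary as the inductive limit, over increasing $M_1$ and $M_2$, of Proposition~\ref{prpbb}, with the limit understood coefficientwise in the formal parameter $u$. By Proposition~\ref{prpbb}, on $\C_{M_1,M_2}[{\bf x},{\bf y}]$ one has
\[
\tilde{\mathcal B}_1(u)=H_{M_1}({\bf x}-\tilde{\partial}_{\bf y},u^2)=\sum_{k=0}^{M_1}u^{2k}h_k({\bf x}-\tilde{\partial}_{\bf y}),
\]
and symmetrically $\tilde{\mathcal B}_2(u)=\sum_{k=0}^{M_2}u^{2k}h_k({\bf y}-\tilde{\partial}_{\bf x})$. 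The subspaces $\C_{M_1,M_2}[{\bf x},{\bf y}]$ are nested and increasing: being spanned by the $S_{[\lambda,\mu]}$ with $\lambda$ of at most $M_1$ columns and $\mu$ of at most $M_2$ columns, every universal character enters $\C_{M_1,M_2}$ as soon as $M_1\geq\lambda_1$ and $M_2\geq\mu_1$, so $\bigcup_{M_1,M_2}\C_{M_1,M_2}[{\bf x},{\bf y}]=\C[{\bf x},{\bf y}]$. Equivalently, by the remark preceding the corollary, the only effect of the two column bounds is the specialization $h_{M_1+1}({\bf x})=\cdots=0$, $h_{M_2+1}({\bf y})=\cdots=0$, which removes precisely the tails $\sum_{k>M_1}u^{2k}h_k({\bf x})$ and $\sum_{k>M_2}u^{2k}h_k({\bf y})$ from the full generating functions.

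The key step is then a stabilization argument. Fix a universal character $S_{[\lambda,\mu]}$ and a power $u^{2k}$. For every $M_1\geq k$ the coefficient of $u^{2k}$ in $H_{M_1}({\bf x}-\tilde{\partial}_{\bf y},u^2)$ is $h_k({\bf x}-\tilde{\partial}_{\bf y})$, independently of $M_1$, and by the Proposition of Section~\ref{sect2} it acts by $h_k({\bf x}-\tilde{\partial}_{\bf y})S_{[\lambda,\mu]}=S_{[\lambda\cdot(k),\mu]}$, again independently of $M_1$. Hence the coefficient of each fixed power $u^{2k}$ in $\tilde{\mathcal B}_1(u)S_{[\lambda,\mu]}$ stabilizes as $M_1\to\infty$, and its stable value is exactly the coefficient of $u^{2k}$ in $H({\bf x}-\tilde{\partial}_{\bf y},u^2)S_{[\lambda,\mu]}$. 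Assembling these coefficientwise limits gives
\[
\tilde{\mathcal B}_1(u)\longrightarrow\sum_{k=0}^\infty u^{2k}h_k({\bf x}-\tilde{\partial}_{\bf y})=H({\bf x}-\tilde{\partial}_{\bf y},u^2)
\]
on $\C[{\bf x},{\bf y}]$. Running the identical argument with ${\bf x}$ and ${\bf y}$ interchanged, and with $M_2$ in place of $M_1$, yields $\tilde{\mathcal B}_2(u)\to H({\bf y}-\tilde{\partial}_{\bf x},u^2)$; the two limits are independent because $\tilde{\mathcal B}_1$ only feels the $x$-truncation and $\tilde{\mathcal B}_2$ only the $y$-truncation. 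This is precisely the claim of the corollary.

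The content here is light: given Proposition~\ref{prpbb} and the remark, everything reduces to bookkeeping. The one point that genuinely needs care is the meaning of the limit, which is also where I expect the only real obstacle. In the full space the operator $H({\bf x}-\tilde{\partial}_{\bf y},u^2)$ applied to a single $S_{[\lambda,\mu]}$ is an infinite series $\sum_{k\geq0}u^{2k}S_{[\lambda\cdot(k),\mu]}$, so the limiting object is not a finite sum and the convergence cannot be ordinary convergence on the finite-dimensional pieces. The correct statement, which I would make explicit, is that the limit is taken in the topology of formal power series in $u$, equivalently coefficientwise with respect to the grading by total degree; with that convention the stabilization above is literally an eventual constancy of each coefficient, and no analytic estimate is required.
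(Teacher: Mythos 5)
Your proof is correct and takes essentially the same route as the paper, whose entire ``proof'' is the remark preceding the corollary that $H_{M}({\bf x},t)$ is the specialization of the full generating function $H({\bf x},t)$ under $h_{M+1}=h_{M+2}=\cdots=0$, so that letting $M_1,M_2\to\infty$ recovers $H$; your coefficientwise-stabilization formulation merely makes explicit the formal-power-series sense of the limit that the paper leaves implicit. One small imprecision worth fixing: on a fixed $S_{[\lambda,\mu]}$ the coefficient of $u^{2k}$ stabilizes only once $M_1\geq\lambda_1+k$ (so that no Pieri term $S_{[\nu,\mu]}$ with $\nu_1>M_1$ is killed by the truncation $h_{M_1+1}({\bf x})=0$), not already for $M_1\geq k$ as you state, though this does not affect the eventual-constancy conclusion.
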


Define
\be\label{tildepsi}
|\tilde\Psi_N(u_1,\cdots,u_N)\rangle:=\prod_{j=1}^NB_2(u_j)B_1(u_j)|0\rangle
\ee
then we obtain the following proposition.
\begin{prp}
The expansion of the $(N,N)$-particle vector (\ref{tildepsi}) in terms of basis vector (\ref{nmbasis}) is given by the formula
\bea
&&|\tilde\Psi_N(u_1,\cdots,u_N)\rangle\nonumber\\
&=&(u_1\cdots u_N)^{-M_1-M_2}\sum_{\lambda,\mu}S_\lambda(u_1^2,\cdots,u_N^2)S_\mu(u_1^2,\cdots,u_N^2)\bigotimes_{i=0}^{M_1} |n_i\rangle_i^{(1)}\bigotimes\bigotimes_{i=0}^{M_2} |m_i\rangle_i^{(2)}\nonumber\\
&=&(u_1\cdots u_N)^{-M_1-M_2}\sum_{\lambda,\mu}S_\lambda(u_1^2,\cdots,u_N^2)S_\mu(u_1^2,\cdots,u_N^2)S_{[\lambda,\mu]}({\bf x},{\bf y})\nonumber
\eea
where the sum is over Young diagrams $\lambda$ with at most $N$ rows and at most $M_1$ columns, Young diagrams $\mu$ with at most $N$ rows and at most $M_2$ columns.
\end{prp}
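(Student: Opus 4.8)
The plan is to reduce everything to the two previous propositions by translating the Fock-space operators $B_1(u),B_2(u)$, via the map $\jmath$, into their action on universal characters, and then to recognize the iterated Pieri rule as the combinatorial definition of a Schur polynomial.

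First I would record the two structural facts that make the computation factor. On $\mathcal F=\mathcal F^{(1)}\otimes\mathcal F^{(2)}$, the operator $B_1(u)$ (an entry of $T_1(u)=L^{(1)}_{M_1}(u)\cdots L^{(1)}_0(u)$) is built only from $\phi^{(1)}_i,\phi^{(1)\dag}_i$, while $B_2(u)$ (an entry of $T_2(u)$) is built only from $\phi^{(2)}_i,\phi^{(2)\dag}_i$; since these act on different tensor factors, $[B_1(u),B_2(v)]=0$ for all $u,v$. Using only this cross-commutativity I can rearrange
\[
\prod_{j=1}^N B_2(u_j)B_1(u_j)=\Big(\prod_{j=1}^N B_2(u_j)\Big)\Big(\prod_{j=1}^N B_1(u_j)\Big),
\]
moving each $B_1(u_j)$ to the right past the later $B_2$'s, without ever commuting two $B_1$'s or two $B_2$'s. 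Next I would invoke $\mathcal B_i(u)=u^{-M_i}\tilde{\mathcal B}_i(u)$ together with Proposition \ref{prpbb}, so that under $\jmath$ the operator $B_1(u)$ acts on $\C_{M_1,M_2}[{\bf x},{\bf y}]$ as $u^{-M_1}H_{M_1}({\bf x}-\tilde{\partial}_{\bf y},u^2)$ and $B_2(u)$ as $u^{-M_2}H_{M_2}({\bf y}-\tilde{\partial}_{\bf x},u^2)$.

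The heart of the argument is the iteration. Since $H_{M_1}({\bf x}-\tilde{\partial}_{\bf y},u^2)=\sum_{k=0}^{M_1}u^{2k}h_k({\bf x}-\tilde{\partial}_{\bf y})$ is exactly the degree-$M_1$ truncation of $\Gamma_1^-(u^2)$, the action formula $\Gamma_1^-(z)S_{[\lambda,\mu]}=\sum_{\nu\succ\lambda}z^{|\nu|-|\lambda|}S_{[\nu,\mu]}$ established in Section \ref{sect2} gives
\[
H_{M_1}({\bf x}-\tilde{\partial}_{\bf y},u^2)\,S_{[\lambda,\mu]}=\sum_{\nu\succ\lambda}(u^2)^{|\nu|-|\lambda|}S_{[\nu,\mu]},
\]
i.e. each application adds a horizontal strip to $\lambda$ weighted by $u^2$ raised to the strip size (the column bound $M_1$ being built into $\C_{M_1,M_2}$). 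Applying $B_1(u_1)\cdots B_1(u_N)$ to $1=S_{[\emptyset,\emptyset]}$ therefore sums over all chains $\emptyset=\lambda^{(0)}\prec\lambda^{(1)}\prec\cdots\prec\lambda^{(N)}=\lambda$ of interlacing partitions with weight $\prod_j (u_j^2)^{|\lambda^{(j)}|-|\lambda^{(j-1)}|}$. Reading such a chain as the semistandard tableau whose $j$-boxes are the strip added at step $j$, this weighted sum is precisely $S_\lambda(u_1^2,\ldots,u_N^2)$, and the bound of at most $N$ rows is automatic since a Schur polynomial in $N$ variables vanishes on partitions with more than $N$ parts. Collecting the scalar prefactors gives
\[
\prod_{j=1}^N B_1(u_j)\cdot 1=(u_1\cdots u_N)^{-M_1}\sum_{\lambda}S_\lambda(u_1^2,\ldots,u_N^2)\,S_{[\lambda,\emptyset]}.
\]
The same computation with $\Gamma_2^-$, using the Lemma to commute the $\mu$-building operators past the $\lambda$-dependence, turns $S_{[\lambda,\emptyset]}$ into $\sum_\mu S_\mu(u_1^2,\ldots,u_N^2)S_{[\lambda,\mu]}$ and supplies the extra factor $(u_1\cdots u_N)^{-M_2}$; substituting back and reading $S_{[\lambda,\mu]}$ through $\jmath$ as the basis vector (\ref{nmbasis}) yields both stated equalities.

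The main obstacle is not a single hard estimate but the bookkeeping that makes the iteration legitimate: one must check that working with $\mathcal B_i=PB_iP$ on the positive-energy space $\hat{\mathcal F}$ is consistent under repeated application (so the projections do not interfere between successive creation operators), and one must correctly identify the iterated Pieri sum with the tableau definition of the Schur polynomial, including the two-sided constraints that $\lambda,\mu$ lie in the $N\times M_1$ and $N\times M_2$ boxes. The column bounds come from the defining truncation $h_{M_i+1}=\cdots=0$ of $\C_{M_1,M_2}[{\bf x},{\bf y}]$, while the row bounds come for free from the vanishing of Schur polynomials in $N$ variables, so the only genuinely delicate point is verifying that the combinatorial weights assemble into $S_\lambda$ and $S_\mu$ and not some other symmetric function.
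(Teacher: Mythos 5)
Your proposal is correct and follows the same skeleton as the paper's proof: commute the $B_2$'s past the $B_1$'s, expand each product separately into Schur polynomials of $u_1^2,\dots,u_N^2$, and conclude via $S_{[\lambda,\mu]}({\bf x},{\bf y})=S_\lambda({\bf x}-\tilde{\partial}_{\bf y})S_\mu({\bf y}-\tilde{\partial}_{\bf x})\cdot 1$. Where you genuinely diverge is the middle step: the paper simply \emph{asserts} the expansions $\prod_{j=1}^N \tilde{\mathcal B}_1(u_j)=\sum_\lambda S_\lambda(u_1^2,\cdots,u_N^2)S_\lambda({\bf x}-\tilde{\partial}_{\bf y})$ and its $\mu$-analogue, inheriting them from the Cauchy-type identity of \cite{PS,NVT}, whereas you \emph{derive} them from the vertex-operator action formulas of Section \ref{sect2} by iterating the truncated Pieri rule and reading a chain of interlacing partitions as a semistandard tableau --- in effect reproving the finite-variable Cauchy identity combinatorially. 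This makes your argument self-contained within the paper (everything reduces to Proposition \ref{prpbb} and the $\Gamma_i^-$ action formulas) at the cost of length, while the paper's three-line version leans on the cited literature. Your side remarks are also sound and sharper than the paper's: only cross-commutativity of $B_1$ and $B_2$ (acting on distinct tensor factors) is needed for the rearrangement, the column bounds follow automatically because the truncation $h_{M_1+1}({\bf x})=\cdots=0$ kills $S_{[\nu,\mu]}$ whenever $\nu_1>M_1$, the row bounds follow from the vanishing of $S_\lambda$ in $N$ variables when $l(\lambda)>N$, and the consistency of the projections $P$ in $\mathcal B_i=PB_iP$ under iteration is precisely the bookkeeping point the paper passes over in silence.
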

\begin{proof}
The operators $B_1(u)$ and $B_2(u)$ are commutative.
\bea
\prod_{j=1}^NB_1(u_j)&=&(u_1\cdots u_N)^{-M_1}\sum_{\lambda}S_{\lambda}(u_1^2,\cdots,u_N^2)S_\lambda({\bf x}-\tilde{ \partial}_{\bf y})\\
\prod_{j=1}^NB_2(u_j)&=&(u_1\cdots u_N)^{-M_2}\sum_{\mu}S_{\mu}(u_1^2,\cdots,u_N^2)S_\mu({\bf y}-\tilde{ \partial}_{\bf x})
\eea
then we have
\bea
&&|\tilde\Psi_N(u_1,\cdots,u_N)\rangle\nonumber\\
&=&(u_1\cdots u_N)^{-M_1-M_2}\sum_{\lambda,\mu}S_\lambda(u_1^2,\cdots,u_N^2)S_\mu(u_1^2,\cdots,u_N^2)S_\lambda({\bf x}-\tilde{ \partial}_{\bf y})S_\mu({\bf y}-\tilde{ \partial}_{\bf x})\cdot1\nonumber\\
&=& (u_1\cdots u_N)^{-M_1-M_2}\sum_{\lambda,\mu}S_\lambda(u_1^2,\cdots,u_N^2)S_\mu(u_1^2,\cdots,u_N^2)S_{[\lambda,\mu]}({\bf x},{\bf y}).
\eea
By the restrictions on $\lambda$ and $\mu$, we obtain the conclusion.
\end{proof}

Recall that
\bea
&&T(u)=\left( \begin{array}{cc}
A(u) & B(u) \\
C(u) & D(u)
\end{array} \right)\nonumber\\
&=&T_2(u)T_1(u)=\left( \begin{array}{cc}
A_2(u) & B_2(u) \\
C_2(u) & D_2(u)
\end{array} \right)\left( \begin{array}{cc}
A_1(u) & B_1(u) \\
C_1(u) & D_1(u)
\end{array} \right)\nonumber
\eea
then
\[
B(u)=A_2(u)B_1(u)+ B_2(u) D_1(u).
\]

The matrix entries of $T_i(u),i=1,2$ are related by the following formulas:
\[
B_i(u)=uA_i(u)\phi_0^{(i)\dag},\quad C_i(u)=u^{-1}\phi_0^{(i)}A^{\dag}_i(u^{-1}), \quad D_i(u)=\phi_0^{(i)}A^{\dag}_i(u^{-1})\phi_0^{(i)\dag}.
\]
Let $\mathcal{A}_i(u)=PA_i(u)P,\ \mathcal{C}_i(u)=PC_i(u)P,\ \mathcal{D}_i(u)=PD_i(u)P$, where $P$ is the projection $\mathcal F\rightarrow \hat{\mathcal F}$, then we have
\begin{lem}
The operators $\mathcal{A}_i(u),\ \mathcal{B}_i(u),\ \mathcal{C}_i(u),\ \mathcal{D}_i(u),\ (i=1,2)$ are related by the following formulas
\[
\mathcal{A}_i(u)=u^{-1}\mathcal{B}_i(u),\ \mathcal{C}_i(u)=\mathcal{B}^{\dag}_i(u^{-1}),\ \mathcal{D}_i(u)=u\mathcal{B}^{\dag}_i(u^{-1}).
\]
\end{lem}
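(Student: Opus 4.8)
The plan is to read off all three identities from the three structural relations between the matrix entries of $T_i(u)$ that precede the lemma, namely $B_i(u)=uA_i(u)\phi_0^{(i)\dag}$, $C_i(u)=u^{-1}\phi_0^{(i)}A_i^\dag(u^{-1})$ and $D_i(u)=\phi_0^{(i)}A_i^\dag(u^{-1})\phi_0^{(i)\dag}$, by compressing each one with the projection $P$ and exploiting one key feature of $P$: since the correspondence $\jmath$ in (\ref{jmath}) ignores the zero-energy occupation numbers $n_0,m_0$, the zero-mode operators $\phi_0^{(i)\dag}$ and $\phi_0^{(i)}$, which only shift $n_0$ (resp. $m_0$) by one, descend to the identity on $\hat{\mathcal F}\cong\C_{M_1,M_2}[{\bf x},{\bf y}]$. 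Concretely I will use, as operator identities on the positive-energy space, $P\phi_0^{(i)\dag}P=P\phi_0^{(i)}P=\Id$.

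First I would prove $\mathcal A_i(u)=u^{-1}\mathcal B_i(u)$. Compressing $B_i(u)=uA_i(u)\phi_0^{(i)\dag}$ gives $\mathcal B_i(u)=PB_i(u)P=u\,PA_i(u)\phi_0^{(i)\dag}P$, and since the trailing $\phi_0^{(i)\dag}$ only raises the forgotten occupation number it may be dropped under the compression, leaving $\mathcal B_i(u)=u\,PA_i(u)P=u\,\mathcal A_i(u)$. Next, for $\mathcal C_i(u)=\mathcal B_i^\dag(u^{-1})$ I would take Hermitian adjoints rather than drop zero modes: using $P^\dag=P$ and $(uA_i(u)\phi_0^{(i)\dag})^\dag=u\phi_0^{(i)}A_i^\dag(u)$, I get $\mathcal B_i^\dag(u^{-1})=P\,u^{-1}\phi_0^{(i)}A_i^\dag(u^{-1})\,P$, which is exactly $PC_i(u)P=\mathcal C_i(u)$ by the given formula for $C_i(u)$. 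Finally, for $\mathcal D_i(u)=u\mathcal B_i^\dag(u^{-1})$ I would compress $D_i(u)=\phi_0^{(i)}A_i^\dag(u^{-1})\phi_0^{(i)\dag}$, drop the trailing $\phi_0^{(i)\dag}$ as before to obtain $\mathcal D_i(u)=P\phi_0^{(i)}A_i^\dag(u^{-1})P=u\cdot\big(u^{-1}P\phi_0^{(i)}A_i^\dag(u^{-1})P\big)$, and recognise the bracket as $\mathcal C_i(u)=\mathcal B_i^\dag(u^{-1})$ from the previous step.

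The hard part will be the justification of the step that $\phi_0^{(i)\dag}$ (and $\phi_0^{(i)}$) may be replaced by the identity under $P\cdot P$, since $A_i(u)$ itself contains the conjugate zero mode $\phi_0^{(i)}$ and hence does not commute with $\phi_0^{(i)\dag}$ on all of $\mathcal F$; naively evaluating on the zero-energy vacuum even produces a spurious boundary term. The clean way to handle this is to work on $\hat{\mathcal F}$, where by (\ref{Npar})--(\ref{jmath}) any relevant state has $n_0=N_1-l(\lambda)$ and $m_0=N_2-l(\mu)$ strictly positive once the total particle numbers are large, so that $\phi_0^{(i)}$ never annihilates and the isometry $\phi_0^{(i)}\phi_0^{(i)\dag}=1$ makes both zero modes act trivially on the $(\lambda,\mu)$-dependence recorded by $\jmath$; passing to the inductive ($N_1,N_2\to\infty$) limit removes the boundary term. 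I expect everything else to be routine bookkeeping with the conventions for $\dag$ and the identification $A_i^\dag(v)=(A_i(v))^\dag$.
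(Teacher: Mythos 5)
Your proposal is correct: the paper states this lemma with no proof at all, and your argument---compressing the three displayed entry relations $B_i(u)=uA_i(u)\phi_0^{(i)\dag}$, $C_i(u)=u^{-1}\phi_0^{(i)}A_i^{\dag}(u^{-1})$, $D_i(u)=\phi_0^{(i)}A_i^{\dag}(u^{-1})\phi_0^{(i)\dag}$ with the projection $P$, taking adjoints for the $\mathcal{C}_i$ identity, and using that the zero modes descend to the identity on $\hat{\mathcal F}$ because $\jmath$ forgets $n_0,m_0$---is exactly the justification the authors leave implicit. Your treatment of the $n_0=0$ boundary term (working in sectors with $n_0=N_1-l(\lambda)$, $m_0=N_2-l(\mu)$ large so that $\phi_0^{(i)}\phi_0^{(i)\dag}=1$ applies without spurious annihilation, then passing to the inductive limit) is a genuine refinement the paper silently glosses over.
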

Since $\mathcal{B}_1(u)=u^{-M_1}H_{M_1}({\bf x}-\tilde{ \partial}_{\bf y}, u^2)$, we obtain the following lemma.
\begin{lem}Let $\tilde{\mathcal{C}}_1(u)=u^{-M_1}\mathcal{C}_1(u)$, we have
\[
\tilde{\mathcal{C}}_1(u)= H_{M_1}^\bot({\bf x}-\tilde{\partial}_{\bf y},u^{-2})= H_{M_1}^\bot({\bf x},u^{-2})
\]
where $H_{M_1}^\bot({\bf x},t)=\sum_{k=0}^{M_1}t^kh_k^{\bot,M_1}({\bf x})$, and $h_k^{\bot,M_1}({\bf x})$ is the adjoint to the operator of multiplication by $h_k({\bf x})$.
\end{lem}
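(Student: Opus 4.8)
The plan is to read off both equalities from adjoint relations already in hand, so that essentially no new computation is needed for the first and only one genuinely structural point is needed for the second. For the first equality I would begin from the preceding Lemma, which gives $\mathcal{C}_1(u)=\mathcal{B}_1^{\dag}(u^{-1})$, and combine it with Proposition~\ref{prpbb} together with $\mathcal{B}_1(u)=u^{-M_1}\tilde{\mathcal{B}}_1(u)$, so that $\mathcal{B}_1(v)=v^{-M_1}H_{M_1}({\bf x}-\tilde{\partial}_{\bf y},v^2)$. Setting $v=u^{-1}$ gives $\mathcal{B}_1(u^{-1})=u^{M_1}H_{M_1}({\bf x}-\tilde{\partial}_{\bf y},u^{-2})$, and taking the adjoint term by term in $H_{M_1}({\bf x}-\tilde{\partial}_{\bf y},t)=\sum_{k=0}^{M_1}t^k h_k({\bf x}-\tilde{\partial}_{\bf y})$ yields $\mathcal{C}_1(u)=u^{M_1}\sum_{k=0}^{M_1}u^{-2k}\,[h_k({\bf x}-\tilde{\partial}_{\bf y})]^{\dag}$. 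Writing $h_k^{\bot,M_1}({\bf x}-\tilde{\partial}_{\bf y})$ for $[h_k({\bf x}-\tilde{\partial}_{\bf y})]^{\dag}$ and multiplying by $u^{-M_1}$ gives exactly $\tilde{\mathcal{C}}_1(u)=H_{M_1}^{\bot}({\bf x}-\tilde{\partial}_{\bf y},u^{-2})$. This step is pure bookkeeping.

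The real content is the second equality, which I would state as the operator identity $h_k^{\bot,M_1}({\bf x}-\tilde{\partial}_{\bf y})=h_k^{\bot,M_1}({\bf x})$ on $\C_{M_1,M_2}[{\bf x},{\bf y}]$, that is, the adjoint of the operator $h_k({\bf x}-\tilde{\partial}_{\bf y})$ agrees with the adjoint of multiplication by $h_k({\bf x})$. My strategy is to show that both sides equal the single concrete operator $h_k(\tilde{\partial}_{\bf x})$. For the left side I would use the adjointness of the vertex operators from the Proposition in Section~\ref{sect2}: since $\Gamma_1^{+}(z)=\sum_k h_k(\tilde{\partial}_{\bf x})z^{-k}$ is adjoint to $\Gamma_1^{-}(z)=\sum_k h_k({\bf x}-\tilde{\partial}_{\bf y})z^{k}$, matching the coefficient that raises $\lambda$ by a horizontal $k$-strip against the one that removes such a strip gives $[h_k({\bf x}-\tilde{\partial}_{\bf y})]^{\dag}=h_k(\tilde{\partial}_{\bf x})$. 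For the right side I would invoke the standard symmetric-function fact that in the Hall pairing the adjoint of multiplication by $h_k({\bf x})$ is the skewing operator $h_k^{\bot}=h_k(\tilde{\partial}_{\bf x})$, so $h_k^{\bot,M_1}({\bf x})=h_k(\tilde{\partial}_{\bf x})$ as well. Equating the two identifications yields the claim.

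To make the identification airtight I would verify it directly on the orthonormal basis $\{S_{[\lambda,\mu]}\}$: using the Proposition of Section~\ref{sect2} on the $\Gamma_1^{-}$ side and Pieri/skewing on the multiplication side, both $[h_k({\bf x}-\tilde{\partial}_{\bf y})]^{\dag}$ and $h_k(\tilde{\partial}_{\bf x})$ send $S_{[\nu,\mu]}$ to $\sum_{\lambda}S_{[\lambda,\mu]}$, the sum over $\lambda$ with $\nu/\lambda$ a horizontal $k$-strip and $\mu$ unchanged, which settles the equality of operators.

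The main obstacle I anticipate is reconciling the two pairings and the box truncation rather than any hard computation. The adjoint on the left is taken in the universal-character pairing, whereas $h_k^{\bot}({\bf x})$ classically refers to the Hall pairing on $\C[{\bf x}]$; these are genuinely different bilinear forms on $\C[{\bf x},{\bf y}]$ (the tensor Hall product does \emph{not} make $\Gamma_1^{\pm}$ mutually adjoint), so the equality must be read as an equality of the resulting concrete operators $h_k(\tilde{\partial}_{\bf x})$, not of abstractly defined adjoints. I would also confirm that passing to the finite box $\C_{M_1,M_2}$ (imposing $h_{M_1+1}({\bf x})=\cdots=0$) does not spoil the identity: the operators in question only remove horizontal strips from $\lambda$, which never leaves the box, so the $M_1$-truncated adjoints still coincide even though the forward operators may differ at the boundary. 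Handling this boundary point carefully is the one place where the argument is more than formal.
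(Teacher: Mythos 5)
Your proof is correct and takes essentially the same route as the paper, whose entire justification is the one-line remark before the lemma: combine $\mathcal{C}_1(u)=\mathcal{B}_1^{\dag}(u^{-1})$ from the preceding lemma with Proposition \ref{prpbb} and take adjoints term by term, exactly your first paragraph. The second equality is asserted in the paper without comment, and your justification of it --- identifying both $[h_k({\bf x}-\tilde{\partial}_{\bf y})]^{\dag}$ (via the adjointness of $\Gamma_1^{\pm}$ in the proposition of Section \ref{sect2}) and $h_k^{\bot,M_1}({\bf x})$ (via the classical Hall-pairing fact) with the concrete skewing operator $h_k(\tilde{\partial}_{\bf x})$, together with the observation that strip removal never leaves the $M_1$-column box --- supplies precisely the details the paper leaves tacit, consistent with its evident intent.
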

Define $\mathcal{B}(u)=PB(u)P$, we have
\bea
\mathcal{B}(u)&=&\mathcal{B}_2(u)(u^{-1}\mathcal{B}_1(u)+u\mathcal{B}_1^\dag(u^{-1}))\nonumber\\
&=&u^{-M_2}H_{M_2}({\bf y}-\tilde{\partial}_{\bf x},u^2)(u^{-M_1-1}H_{M_1}({\bf x}-\tilde{\partial}_{\bf y},u^2)+u^{M_1+1}H_{M_1}^\bot({\bf x}-\tilde{\partial}_{\bf y},u^{-2})).\nonumber
\eea
We write $H_{M_1}({\bf x}-\tilde{\partial}_{\bf y},u^2)$ by $H_1(u^2)$ for short. From the bilinear equation (\ref{rtt}), we have
\bea
u_1^{M_1+1}u_2^{-M_1-1}H_1^\bot(u_1^{-2})H_1(u_2^{2})&=&u_1^{M_1+1}u_2^{-M_1-1}\frac{u_1^2}{u_1^2-u_2^2}H_1(u_2^{2})H_1^\bot(u_1^{-2})\nonumber\\
&-&u_1^{-M_1-1}u_2^{M_1+1}\frac{u_1^2}{u_1^2-u_2^2}H_1(u_1^{2})H_1^\bot(u_2^{-2}).\nonumber
\eea
Recall that \[
|\Psi_N(u_1,\cdots,u_N)\rangle=\prod_{j=1}^N B(u_j)|0\rangle.
\]
 Then by calculation, we get
\begin{prp}
The operators $u_1^{-1}\mathcal{B}_1(u_1)+u_1\mathcal{B}_1^\dag(u_1^{-1})$ and $u_2^{-1}\mathcal{B}_1(u_2)+u_2\mathcal{B}_1^\dag(u_2^{-1})$ are commutative, which tells us that the coefficients, in the expansion $|\Psi_N(u_1,\cdots,u_N)\rangle$ in terms of basis vector (\ref{nmbasis}), are symmetric functions of variables $u_1^2,\cdots,u_N^2$.
\end{prp}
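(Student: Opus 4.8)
The plan is to first identify the operator in the statement with the (projected) transfer matrix of the first sub-chain. Using $\mathcal A_i(u)=u^{-1}\mathcal B_i(u)$ and $\mathcal D_i(u)=u\mathcal B_i^{\dag}(u^{-1})$ from the preceding lemma, the operator
\[
\Phi_1(u):=u^{-1}\mathcal B_1(u)+u\mathcal B_1^{\dag}(u^{-1})=\mathcal A_1(u)+\mathcal D_1(u)
\]
is exactly the trace of the projected monodromy $T_1(u)$, so its commutativity in $u$ is the integrability statement one expects, and I would prove it directly from the bilinear relation displayed immediately before the statement. Substituting $\mathcal B_1(u)=u^{-M_1}H_1(u^2)$ and $\mathcal B_1^{\dag}(u^{-1})=u^{M_1}H_1^{\bot}(u^{-2})$ (from Proposition \ref{prpbb} and the two preceding lemmas, with $H_1(\cdot)=H_{M_1}({\bf x}-\tilde{\partial}_{\bf y},\cdot)$), I would write $\Phi_1(u)=E(u)+F(u)$ with $E(u)=u^{-M_1-1}H_1(u^2)$ and $F(u)=u^{M_1+1}H_1^{\bot}(u^{-2})$.

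Expanding $\Phi_1(u_1)\Phi_1(u_2)$ produces four terms, of which the two pure terms $E(u_1)E(u_2)$ and $F(u_1)F(u_2)$ are symmetric under $u_1\leftrightarrow u_2$: the factors of $E$ are polynomials in the pairwise-commuting operators $x_n-\frac1n\partial_{y_n}$, hence commute, and the factors of $F$ are their adjoints, hence commute as well. Thus $[\Phi_1(u_1),\Phi_1(u_2)]=[E(u_1),F(u_2)]+[F(u_1),E(u_2)]$. The commutator $[F(u_1),E(u_2)]$ contains $H_1^{\bot}(u_1^{-2})H_1(u_2^2)$, which is precisely what the displayed relation rewrites; substituting it and simplifying $\frac{u_1^2}{u_1^2-u_2^2}-1=\frac{u_2^2}{u_1^2-u_2^2}$ reduces $[F(u_1),E(u_2)]$ to the normal-ordered products $H_1(u_2^2)H_1^{\bot}(u_1^{-2})$ and $H_1(u_1^2)H_1^{\bot}(u_2^{-2})$. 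For $[E(u_1),F(u_2)]$ I would invoke the same relation with $u_1$ and $u_2$ interchanged---legitimate because the bilinear equation (\ref{rtt}) holds for arbitrary spectral parameters---reducing it to the same two products. Adding the two contributions, the coefficients of $H_1(u_1^2)H_1^{\bot}(u_2^{-2})$ and of $H_1(u_2^2)H_1^{\bot}(u_1^{-2})$ cancel identically, so $[\Phi_1(u_1),\Phi_1(u_2)]=0$.

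For the stated consequence I would start from $\mathcal B(u)=\mathcal B_2(u)\Phi_1(u)$ and check that $\mathcal B_2(u)=u^{-M_2}H_{M_2}({\bf y}-\tilde{\partial}_{\bf x},u^2)$ commutes with every $\mathcal B_2(u')$ and with every $\Phi_1(u')$. Commutativity with $\mathcal B_2(u')$ is clear, since both are polynomials in the commuting operators $y_n-\frac1n\partial_{x_n}$. Commutativity with $\Phi_1(u')$ splits into two parts: the $E$-part commutes with $\mathcal B_2$ by the first Lemma (the operators $x_m-\frac1m\partial_{y_m}$ and $y_n-\frac1n\partial_{x_n}$ commute), while the $F$-part commutes because, by the lemma giving $\tilde{\mathcal C}_1(u)=H_{M_1}^{\bot}({\bf x},u^{-2})$, the operator $H_1^{\bot}({\bf x}-\tilde{\partial}_{\bf y},\cdot)=H_1^{\bot}({\bf x},\cdot)$ involves only the derivations $\partial_{x_j}$, which commute with $y_n-\frac1n\partial_{x_n}$. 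Consequently $\prod_{j=1}^N\mathcal B(u_j)=\big(\prod_{j=1}^N\mathcal B_2(u_j)\big)\big(\prod_{j=1}^N\Phi_1(u_j)\big)$ is invariant under permutations of $u_1,\dots,u_N$; since each factor has a fixed parity under $u_j\mapsto-u_j$, expanding $|\Psi_N(u_1,\dots,u_N)\rangle$ in the basis (\ref{nmbasis}) yields coefficients that are symmetric functions of $u_1^2,\dots,u_N^2$.

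The step I expect to be the main obstacle is the cancellation of the two mixed commutators: one must recognize that the single relation printed in the text does not suffice on its own and that its $u_1\leftrightarrow u_2$ image is needed as a second, independent input, and then track the rational prefactors $\frac{u_i^2}{u_i^2-u_j^2}$ carefully enough to see that the two mixed commutators are exact negatives of one another. A secondary point requiring care is the commutativity of $\mathcal B_2$ with the adjoint part $F$ in the symmetry argument, which relies essentially on the identity $H_1^{\bot}({\bf x}-\tilde{\partial}_{\bf y},\cdot)=H_1^{\bot}({\bf x},\cdot)$ recorded in the lemma just before the statement.
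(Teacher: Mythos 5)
Your proof is correct and takes essentially the same route the paper intends: the paper's entire justification is the phrase ``Then by calculation, we get'' applied to the displayed exchange relation for $H_1^\bot(u_1^{-2})H_1(u_2^{2})$ derived from (\ref{rtt}), and your computation---splitting $u^{-1}\mathcal{B}_1(u)+u\mathcal{B}_1^\dag(u^{-1})=E(u)+F(u)$, observing that the pure $E$- and $F$-terms commute, and cancelling the two mixed commutators using that relation together with its $u_1\leftrightarrow u_2$ relabelling (trivially available since the relation holds for arbitrary spectral parameters)---is precisely the calculation being alluded to, and it checks out, including the simplification $\frac{u_1^2}{u_1^2-u_2^2}-1=\frac{u_2^2}{u_1^2-u_2^2}$. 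Your handling of the consequence (commutativity of the full $\mathcal{B}(u_j)$ via the commutation of $\mathcal{B}_2$ with both parts, using $H_{M_1}^\bot({\bf x}-\tilde{\partial}_{\bf y},\cdot)=H_{M_1}^\bot({\bf x},\cdot)$) likewise matches the paper's setup.
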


Since \bea
\prod_{j=1}^N \mathcal B(u_j)|0\rangle&=&\prod_{j=1}^N \mathcal{B}_2(u_j)(u_j^{-1}\mathcal{B}_1(u_j)+u_j\mathcal{B}_1^\dag(u_j^{-1}))|0\rangle\nonumber\\
&=&\prod_{j=1}^N \mathcal{B}_2(u_j)\prod_{j=1}^N(u_j^{-1}\mathcal{B}_1(u_j)+u_j\mathcal{B}_1^\dag(u_j^{-1}))|0\rangle\nonumber
\eea
and
\[
\prod_{j=1}^N \mathcal{B}_2(u_j)=(u_1\cdots u_N)^{-M_2}\sum_{\mu}S_\mu(u_1^2,\cdots,u_N^2)S_\mu({\bf y}-\tilde{ \partial}_{\bf x})
\]
where $\mu$ is a Young diagram with at most $N$ rows and at most $M_2$ columns.
Hence, in the following, we consider the expansion of $\prod_{j=1}^N(u_j^{-1}\mathcal{B}_1(u_j)+u_j\mathcal{B}_1^\dag(u_j^{-1}))|0\rangle$.

\begin{prp}
Let $k_1,k_2\cdots,k_i$ be in the set $\{1,2,\cdots, N\}$ and satisfy $k_1<k_2<\cdots<k_i$. We denote $u_{k_1}u_{k_2}\cdots u_{k_i}$ by $u_{\{k\}}$ for short. Then we have
\bea
&&\prod_{j=1}^N(u_j^{-1}\mathcal{B}_1(u_j)+u_j\mathcal{B}_1^\dag(u_j^{-1}))|0\rangle\nonumber\\
&=&(u_1\cdots u_N)^{M_1+1}\sum_{i=0}^N\sum_{k_1,\cdots,k_i}(u_{\{k\}})^{-2M_1-2}\prod_{j\neq k_i}\frac{u_j^2}{u_j^2-u_{k_i}^2}H_1(u_{k_1}^2)\cdots H_1(u_{k_i}^2)\cdot 1\nonumber
\eea
where
\[
H_1(u_{k_1}^2)\cdots H_1(u_{k_i}^2)\cdot 1=\sum_{\lambda}S_{\lambda}(u_{k_1}^2,\cdots,u_{k_i}^2)S_{\lambda}({\bf x}-\tilde{\partial}_{\bf y})\cdot 1.
\]
\end{prp}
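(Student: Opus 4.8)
The plan is to reduce everything to the two commuting families of operators introduced above and then to induct on $N$, the only genuinely nontrivial input being the bilinear relation displayed just before the statement.

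First I would fix a dictionary. By Proposition~\ref{prpbb} and the two Lemmas preceding the statement, on $\C_{M_1,M_2}[{\bf x},{\bf y}]$ one has $u^{-1}\mathcal{B}_1(u)=u^{-M_1-1}H_1(u^2)$ and $u\,\mathcal{B}_1^\dag(u^{-1})=u^{M_1+1}H_1^\bot(u^{-2})$; abbreviate these by $a_j=a(u_j)$ and $b_j=b(u_j)$. I would record four facts: (i) $b_j\cdot 1=u_j^{M_1+1}$, since $h_k^\bot({\bf x}-\tilde\partial_{\bf y})\cdot 1=0$ for $k\geq 1$; (ii) the $a_j$ mutually commute and the $b_j$ mutually commute, each being a power series in the commuting operators $x_n-\tfrac1n\partial_{y_n}$ and their adjoints; (iii) the bilinear relation before the statement reads exactly $b_m a_k=f_{m,k}\,(a_k b_m-a_m b_k)$ with $f_{m,k}=u_m^2/(u_m^2-u_k^2)$; and (iv) the full operators $a_j+b_j$ mutually commute, which is the content of the preceding Proposition. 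Finally $\prod_{k\in K}a_k\cdot 1=\big(\prod_{k\in K}u_k^{-M_1-1}\big)\prod_{k\in K}H_1(u_k^2)\cdot 1$, and the last product is the symmetric object $\sum_\lambda S_\lambda(\{u_k^2\}_{k\in K})S_\lambda({\bf x}-\tilde\partial_{\bf y})\cdot 1$ named in the statement.

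With this dictionary the claim is that in $\prod_{j=1}^N(a_j+b_j)\cdot 1=\sum_{K\subseteq\{1,\dots,N\}}c_K\prod_{k\in K}a_k\cdot 1$ the coefficient equals
\[
c_K=\prod_{j\notin K}u_j^{M_1+1}\ \prod_{j\notin K}\ \prod_{k\in K}\frac{u_j^2}{u_j^2-u_k^2};
\]
after extracting the global prefactor $(u_1\cdots u_N)^{M_1+1}$ and the normalisation $\prod_{k\in K}u_k^{-M_1-1}$ this is precisely the factor $(u_{\{k\}})^{-2M_1-2}\prod_{j\neq k_i}\tfrac{u_j^2}{u_j^2-u_{k_i}^2}$ of the statement, the compact product being read over non-chosen indices $j$ against all chosen indices $k$. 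I would prove it by induction on $N$. Using commutativity (iv) I pull the factor for the largest index to the front, $\prod_{j=1}^N(a_j+b_j)\cdot 1=(a_N+b_N)\prod_{j=1}^{N-1}(a_j+b_j)\cdot 1$, and insert the inductive expansion. The term $a_N$ merely prepends a creation operator and so feeds the monomials containing $N$; the term $b_N$ must be commuted through each creation string $\prod_{k\in K'}a_k$ down to the vacuum via (iii). For $N\notin K$ the only surviving history is full transmission (each use of the reflection term in (iii) manufactures a factor $a_N$, which would force $N\in K$), giving at once $c_K=u_N^{M_1+1}\prod_{k\in K}\tfrac{u_N^2}{u_N^2-u_k^2}\,c_K^{(N-1)}$, in agreement with the closed form above.

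The case $N\in K$ is the heart of the matter, and I expect it to be the main obstacle. Here the coefficient collects $c_{K\setminus\{N\}}^{(N-1)}$ (from $a_N$) together with every ``bounce'' history of $b_N$, and distinct histories land on the same monomial $\prod_{k\in K}a_k$. I would isolate a single-annihilation lemma, $b_m\prod_{k\in K}a_k\cdot 1=\sum_{\rho\in K\cup\{m\}}(\pm)\,u_\rho^{M_1+1}\,\tfrac{(\text{numerator in }u^2)}{\prod_{s\neq\rho}(u_\rho^2-u_s^2)}\prod_{s\neq\rho}a_s\cdot 1$, proved by induction on $|K|$ from (iii); the passage from the sum over bounce histories to a single product per exit point $\rho$ is a partial-fraction (Lagrange-interpolation/residue) identity in the variables $u_j^2$. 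This resummation cannot be dodged by a cleverer peeling: transmission lets one strip non-chosen indices lying above $\max K$ (peeling from the right) or below $\min K$ (from the left), but a non-chosen index in a gap between two elements of $K$ is removable by neither, and exactly those gap indices generate the nontrivial reflection sums. Once the lemma is established, substituting it into the inductive step and checking that the $a_N$-contribution and the reflection contributions of $b_N$ sum to the claimed $c_K$ is routine but lengthy bookkeeping, anchored by the base cases $K=\{1,\dots,N\}$ (coefficient $1$, no commutation needed) and $K=\emptyset$ (coefficient $\prod_j u_j^{M_1+1}$).
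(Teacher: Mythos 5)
Your proposal is correct and takes the same route as the paper, whose entire proof consists of the single sentence ``One can prove it by inductions'': you carry out precisely that induction, and your dictionary checks out against the text ($u^{-1}\mathcal{B}_1(u)=u^{-M_1-1}H_1(u^2)$, $u\,\mathcal{B}_1^\dag(u^{-1})=u^{M_1+1}H_1^\bot(u^{-2})$, and the displayed bilinear relation is exactly your $b_m a_k=f_{m,k}(a_k b_m-a_m b_k)$). The pieces you add that the paper leaves entirely implicit --- the single-annihilation lemma $b_m\prod_{k\in K}a_k\cdot 1=\prod_{s\in K}f_{m,s}\,u_m^{M_1+1}\prod_{s\in K}a_s\cdot 1-\sum_{\rho\in K}f_{m,\rho}\prod_{s\in K\setminus\{\rho\}}f_{\rho,s}\,u_\rho^{M_1+1}\,a_m\prod_{s\in K\setminus\{\rho\}}a_s\cdot 1$, and the closing partial-fraction identity $1-\sum_{\rho\in J}f_{N,\rho}\prod_{j\in J\setminus\{\rho\}}f_{j,\rho}=\prod_{j\in J}f_{j,N}$ --- are exactly the content the one-line proof gestures at, and they verify correctly (including the $N=1,2$ cases and the closed form $c_K=\prod_{j\notin K}u_j^{M_1+1}\prod_{j\notin K}\prod_{k\in K}f_{j,k}$, which matches the statement under your reading of its ambiguous product notation).
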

\begin{proof}
One can prove it by inductions.
\end{proof}
From the discussion above, we get the expansion of $|\Psi_N(u_1,\cdots,u_N)\rangle$.
\begin{prp}The $N$-particle vector $|\Psi_N(u_1,\cdots,u_N)\rangle$ can be written as
\bea
|\Psi_N(u_1,\cdots,u_N)\rangle&=&(u_1\cdots u_N)^{-M_2+M_1+1}\sum_{i=0}^N\sum_{k_1,\cdots,k_i}(u_{\{k\}})^{-2M_1-2}\prod_{j\neq k_i}\frac{u_j^2}{u_j^2-u_{k_i}^2}\nonumber\\
&&\sum_{\lambda,\mu}S_{\lambda}(u_{k_1}^2,\cdots,u_{k_i}^2)S_\mu(u_1^2,\cdots,u_N^2)S_{[\lambda,\mu]}({\bf x},{\bf y})\nonumber
\eea
where $\lambda$ is a Young diagram with at most $i$ rows and $M_1$ columns, and $\mu$ a Young diagram with at most $N$ rows and $M_2$ columns.
\end{prp}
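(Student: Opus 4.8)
The plan is to obtain the stated expansion by multiplying together two formulas already proved and then recombining the resulting differential operators into a universal character. The three ingredients are: the factorization $\mathcal{B}(u)=\mathcal{B}_2(u)\bigl(u^{-1}\mathcal{B}_1(u)+u\mathcal{B}_1^\dagger(u^{-1})\bigr)$ derived above from $B(u)=A_2(u)B_1(u)+B_2(u)D_1(u)$ together with the relations $\mathcal{A}_2(u)=u^{-1}\mathcal{B}_2(u)$ and $\mathcal{D}_1(u)=u\mathcal{B}_1^\dagger(u^{-1})$; the explicit formula $\prod_{j=1}^N\mathcal{B}_2(u_j)=(u_1\cdots u_N)^{-M_2}\sum_\mu S_\mu(u_1^2,\dots,u_N^2)\,S_\mu({\bf y}-\tilde{\partial}_{\bf x})$; and the expansion of $\prod_{j=1}^N\bigl(u_j^{-1}\mathcal{B}_1(u_j)+u_j\mathcal{B}_1^\dagger(u_j^{-1})\bigr)|0\rangle$ established in the preceding proposition.

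First I would pass to $|\Psi_N\rangle=\prod_{j=1}^N\mathcal{B}(u_j)|0\rangle$ on $\hat{\mathcal F}$ and use the commutativity of the two families $\mathcal{B}_2(u_j)$ and $u_j^{-1}\mathcal{B}_1(u_j)+u_j\mathcal{B}_1^\dagger(u_j^{-1})$, which follows because the former is built from $y_n-\tfrac1n\partial_{x_n}$ and the latter from $x_m-\tfrac1m\partial_{y_m}$ (the Lemma), to pull every $\mathcal{B}_2$ factor to the left:
\[
\prod_{j=1}^N\mathcal{B}(u_j)|0\rangle=\Bigl(\prod_{j=1}^N\mathcal{B}_2(u_j)\Bigr)\Bigl(\prod_{j=1}^N\bigl(u_j^{-1}\mathcal{B}_1(u_j)+u_j\mathcal{B}_1^\dagger(u_j^{-1})\bigr)\Bigr)|0\rangle.
\]
Substituting the two displayed formulas, the scalar prefactors combine as $(u_1\cdots u_N)^{-M_2}\cdot(u_1\cdots u_N)^{M_1+1}=(u_1\cdots u_N)^{-M_2+M_1+1}$, reproducing the overall power in the target, while the rational factor $(u_{\{k\}})^{-2M_1-2}\prod_{j\neq k_i}\tfrac{u_j^2}{u_j^2-u_{k_i}^2}$ and the two Schur coefficients $S_\mu(u_1^2,\dots,u_N^2)$ and $S_\lambda(u_{k_1}^2,\dots,u_{k_i}^2)$ are carried along unchanged.

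It then remains to identify the operator part. After the reordering, the action on $1$ is $S_\mu({\bf y}-\tilde{\partial}_{\bf x})\,S_\lambda({\bf x}-\tilde{\partial}_{\bf y})\cdot 1$; invoking the Lemma once more to swap these commuting operators gives $S_\lambda({\bf x}-\tilde{\partial}_{\bf y})\,S_\mu({\bf y}-\tilde{\partial}_{\bf x})\cdot 1=S_{[\lambda,\mu]}({\bf x},{\bf y})$ by the identity recalled in Section~\ref{sect2}. Finally I would fix the summation ranges: $S_\lambda(u_{k_1}^2,\dots,u_{k_i}^2)$ vanishes unless $\lambda$ has at most $i$ rows, and working inside $\C_{M_1,M_2}[{\bf x},{\bf y}]$ (equivalently $h_{M_1+1}({\bf x})=\cdots=0$) forces $\lambda$ to have at most $M_1$ columns; likewise $S_\mu(u_1^2,\dots,u_N^2)$ confines $\mu$ to at most $N$ rows and the space confines it to at most $M_2$ columns. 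Collecting these facts yields exactly the asserted expansion.

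The computational heart of this proposition has essentially already been done in the preceding proposition (the induction producing the $\mathcal{B}_1$-expansion), so the present argument is mainly an assembly step. The one point demanding care — and the most likely source of error rather than genuine difficulty — is the operator bookkeeping: justifying the two uses of the commutativity Lemma so that the $\mathsf{y}$- and $\mathsf{x}$-operators recombine in the correct order into $S_{[\lambda,\mu]}$, and verifying that the row bound ($\lambda$ at most $i$ rows) and the column bounds ($\lambda$ at most $M_1$, $\mu$ at most $M_2$) are simultaneously inherited from the vanishing of Schur polynomials in finitely many variables and from the truncation defining $\C_{M_1,M_2}[{\bf x},{\bf y}]$.
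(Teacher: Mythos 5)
Your proposal is correct and takes essentially the same route as the paper, whose entire argument is the phrase ``from the discussion above'': it likewise pulls $\prod_{j}\mathcal{B}_2(u_j)$ to the left by commutativity, substitutes the two previously established expansions, and recombines $S_\mu({\bf y}-\tilde{\partial}_{\bf x})S_\lambda({\bf x}-\tilde{\partial}_{\bf y})\cdot 1$ into $S_{[\lambda,\mu]}({\bf x},{\bf y})$. If anything, your write-up is more careful than the paper's, since you explicitly invoke the commutativity lemma for the operator reordering and derive the row bounds from the vanishing of Schur polynomials in finitely many variables and the column bounds from the truncation defining $\C_{M_1,M_2}[{\bf x},{\bf y}]$, points the paper leaves implicit.
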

In a special case, we have $S_{[\lambda,\emptyset]}({\bf x},{\bf y})=S_{\lambda}({\bf x})$. Let $M_2=\emptyset$,
\bea
|\Psi_N(u_1,\cdots,u_N)\rangle&=&\prod_{j=1}^N B_1(u_j)|0\rangle=(u_1\cdots u_N)^{-M_1}\prod_{j=1}^N \tilde{\mathcal B}_1(u_j)|0\rangle\nonumber\\
&=&\sum_{\lambda}S_\lambda(u_1^2,\cdots,u_N^2)S_\lambda({\bf x})\nonumber
\eea
which is the same as in \cite{PS,NVT}.
\section{Vertex operators and topological strings on $\C^3$}\label{sect5}
Let $p_k$ denote the power sum, it is known that the following relation holds
\[
\sum_{k=0}^\infty h_k t^k=\exp{\sum_{k=1}^\infty \frac{p_k}{k}t^k}.
\]
Then we have the following proposition.
\begin{prp}
In the $M_1,M_2\rightarrow \infty$ limit, operators $\tilde{\mathcal B}_i(u)$ and $\tilde{\mathcal C}_i(u)$ have the following vertex operator representations
\bea
\tilde{\mathcal B}_1(u)&=&e^{\xi({\bf x}-\tilde{\partial}_{\bf y},u^2)}=\Gamma_1^+(u^2),\\
\tilde{\mathcal C}_1(u)&=&e^{\xi(\tilde{\partial}_{\bf x},u^{-2})}\ =\Gamma_1^-(u^2),\\
\tilde{\mathcal B}_2(u)&=&e^{\xi({\bf y}-\tilde{\partial}_{\bf x},u^2)}=\Gamma_2^+(u^2),\\
\tilde{\mathcal C}_2(u)&=&e^{\xi(\tilde{\partial}_{\bf y},u^{-2})}\  =\Gamma_2^-(u^2),
\eea
where the vertex operators $\Gamma_i^\pm(t),i=1,2$ are defined in (\ref{g1}) and (\ref{g2}).
\end{prp}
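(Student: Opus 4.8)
The plan is to derive all four identities by letting $M_1, M_2 \to \infty$ in the formulas already established for finite $M_i$ and then recognizing the resulting generating functions as exponentials; the two creation operators come out directly from the definition of $h_k$, while the two annihilation operators need an extra adjointness argument. For $\tilde{\mathcal B}_1(u)$ and $\tilde{\mathcal B}_2(u)$ I would invoke Proposition \ref{prpbb} and its Corollary, which say that in the limit these act as multiplication by $H({\bf x}-\tilde\partial_{\bf y}, u^2)$ and $H({\bf y}-\tilde\partial_{\bf x}, u^2)$, where $H({\bf x},t) = \sum_{k=0}^\infty t^k h_k({\bf x})$. Because every element of $\C[{\bf x},{\bf y}]$ is a finite sum of homogeneous pieces and only finitely many $h_k$ act nontrivially on each, this limit is a well-defined operator. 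The generating-function definition (\ref{hxi}) is precisely $\sum_{k=0}^\infty h_k({\bf x}) t^k = e^{\xi({\bf x},t)}$ (equivalently the identity $\sum_k h_k t^k = \exp\sum_k \frac{p_k}{k}t^k$ recalled at the start of this section, under $x_k = p_k/k$), so the substitutions ${\bf x}\mapsto{\bf x}-\tilde\partial_{\bf y}$, $t\mapsto u^2$ give $\tilde{\mathcal B}_1(u)=e^{\xi({\bf x}-\tilde\partial_{\bf y},u^2)}$, which is the claimed vertex operator by (\ref{g1}); the formula for $\tilde{\mathcal B}_2(u)$ is identical after exchanging the roles of ${\bf x}$ and ${\bf y}$ and using (\ref{g2}).

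For $\tilde{\mathcal C}_1(u)$ and $\tilde{\mathcal C}_2(u)$ I would start from the Lemma giving $\tilde{\mathcal C}_1(u)=H_{M_1}^\bot({\bf x},u^{-2})$ (where the $\tilde\partial_{\bf y}$-shift has already been dropped), whose $M_1\to\infty$ limit is $H^\bot({\bf x},u^{-2})=\sum_{k=0}^\infty u^{-2k}h_k^\bot({\bf x})$. The key point is to identify the adjoint generating function $\sum_k t^k h_k^\bot({\bf x})$ with $e^{\xi(\tilde\partial_{\bf x},t)}$. I would obtain this by taking adjoints in the multiplication formula $H({\bf x},t)=e^{\xi({\bf x},t)}$ with respect to the inner product for which the Schur functions are orthonormal: under this pairing the adjoint of multiplication by $x_n$ is the differential operator $\frac1n\partial_{x_n}=(\tilde\partial_{\bf x})_n$, so the adjoint of multiplication by $\xi({\bf x},t)=\sum_n x_n t^n$ is $\xi(\tilde\partial_{\bf x},t)=\sum_n \frac1n\partial_{x_n}t^n$. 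Since multiplication operators commute, taking adjoints commutes with exponentiation, whence $H^\bot({\bf x},t)=e^{\xi(\tilde\partial_{\bf x},t)}$; setting $t=u^{-2}$ yields $\tilde{\mathcal C}_1(u)=e^{\xi(\tilde\partial_{\bf x},u^{-2})}$, the vertex operator in (\ref{g1}). The $\tilde{\mathcal C}_2(u)$ identity is symmetric under ${\bf x}\leftrightarrow{\bf y}$ using (\ref{g2}).

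The step I expect to be the main obstacle is the adjoint computation of the previous paragraph: carefully justifying that multiplication by $x_n$ is adjoint to $\frac1n\partial_{x_n}$ for the relevant inner product and that $(e^{\xi({\bf x},t)})^\bot=e^{\xi(\tilde\partial_{\bf x},t)}$, i.e. that the skewing operators $h_k^\bot$ assemble into the exponential of the derivation $\xi(\tilde\partial_{\bf x},t)$. This is the standard vertex-operator description of the skewing operators in symmetric function theory, but it is the one genuinely non-formal ingredient; by contrast the creation-operator identities are a direct reading of (\ref{hxi}).
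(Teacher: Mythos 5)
Your argument is correct and matches the paper's intended derivation: the paper offers no separate proof of this proposition, treating it as the immediate $M_1,M_2\to\infty$ limit of Proposition \ref{prpbb}, its corollary, and the lemma $\tilde{\mathcal{C}}_1(u)=H_{M_1}^\bot({\bf x},u^{-2})$, combined with the identity $\sum_{k}h_k t^k=\exp\sum_{k}\frac{p_k}{k}t^k$ recalled at the start of the section, so your adjoint computation identifying $\sum_k h_k^\bot t^k$ with $e^{\xi(\tilde{\partial}_{\bf x},t)}$ merely supplies a detail the paper leaves implicit. The only discrepancy is a typo in the paper itself rather than in your proof: by the definitions (\ref{g1})--(\ref{g2}), the exponentials you derive are $\Gamma_1^-(u^2)$, $\Gamma_1^+(u^2)$, $\Gamma_2^-(u^2)$, $\Gamma_2^+(u^2)$ respectively, i.e.\ the $\pm$ superscripts printed in the proposition are swapped, and what your matching of exponentials establishes is the correct version.
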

The A-model topological string partition function on $\C^3$ is given by the MacMahon function
\be
Z_{\C^3}^{top}=M(q)=\prod_{n=1}^\infty\frac{1}{(1-q^n)^n}=\sum_{n=0}^\infty P(n)q^n.
\ee
It is related to the topological vertex and $P(n)$ counts the number of plane partition whose total boxes number equals $n$. It is known that the generating function of plane partitions can be written as a fermionic correlator involving the standard vertex operators
\bea
\tilde{\Gamma}^+(z)=e^{\xi({\bf x},z)},\quad \tilde{\Gamma}^-(z)=e^{\xi(\tilde{\partial}_{\bf x},z^{-1})}
\eea
with a particular specialization of the values of $z=q^{\pm 1/2},q^{\pm 3/2}, q^{\pm 5/2},\cdots$.

In the following, we will give that $Z_{\C^3}^{top}$ can also be obtained from the vertex operators $\Gamma^{\pm}_i(t)$, $i=1,2$.
\begin{lem}\label{lemmazc}
The following relation holds
\be
\langle 0|\prod_{m=1}^\infty \Gamma_2^-(q^{m-1/2})\Gamma_1^-(q^{m-1/2})|0\rangle=\prod_{n\geq 1}(1-q^n)^n.
\ee
\end{lem}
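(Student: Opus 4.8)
The plan is to factor each vertex operator into a multiplication part and a differentiation part, collapse the correlator into a product of scalar contraction factors by normal ordering, and finally count the multiplicity with which each factor $(1-q^n)$ appears. First I would split $\Gamma_1^-(z)=e^{\xi({\bf x},z)}\,e^{-\xi(\tilde{\partial}_{\bf y},z)}$ and $\Gamma_2^-(z)=e^{\xi({\bf y},z)}\,e^{-\xi(\tilde{\partial}_{\bf x},z)}$; each factorization is exact because within a single operator the multiplication variables and the differentiation variables are disjoint and hence commute. Writing $A_1(z)=e^{\xi({\bf x},z)}$, $A_2(z)=e^{\xi({\bf y},z)}$ for the multiplication (``creation'') parts and $B_1(z)=e^{-\xi(\tilde{\partial}_{\bf y},z)}$, $B_2(z)=e^{-\xi(\tilde{\partial}_{\bf x},z)}$ for the differentiation (``annihilation'') parts, I would record the vacuum identities $\langle 0|A_i(z)=\langle 0|$ and $B_i(z)|0\rangle=|0\rangle$: the second holds because each differentiation kills the constant function $1$, and the first because the multiplication operators $A_i(z)$ leave the constant term of a polynomial unchanged. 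Consequently any expression already in the normal-ordered form $\langle 0|(\text{all }A)(\text{all }B)|0\rangle$ evaluates to $1$.

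Next I would establish the only nontrivial contraction. Since the $A_i$ commute among themselves, the $B_i$ commute among themselves, and $A_i$ commutes with $B_i$ (disjoint variable sets), the single relevant Baker--Campbell--Hausdorff computation is between multiplication and differentiation in the same variables. From $[x_n,\partial_{x_m}]=-\delta_{nm}$ one finds $[\xi({\bf x},z),-\xi(\tilde{\partial}_{\bf x},w)]=\sum_{n\ge1}\frac{(zw)^n}{n}=-\log(1-zw)$, a scalar, so that $B_2(w)A_1(z)=(1-zw)\,A_1(z)B_2(w)$ and, symmetrically, $B_1(w)A_2(z)=(1-zw)\,A_2(z)B_1(w)$.

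Then, with $z_m=q^{m-1/2}$, I would normal order $\prod_{m\ge1}A_2(z_m)B_2(z_m)A_1(z_m)B_1(z_m)$ by moving every $B$ rightward past all the $A$'s standing to its right. Only two families of swaps produce a factor: $B_2(z_m)$ against $A_1(z_{m'})$ for $m'\ge m$, and $B_1(z_m)$ against $A_2(z_{m'})$ for $m'>m$, each contributing $(1-z_mz_{m'})=(1-q^{m+m'-1})$. Using the vacuum identities, the correlator then equals $\prod_{m\le m'}(1-q^{m+m'-1})\prod_{m<m'}(1-q^{m+m'-1})$. Setting $n=m+m'-1$, the number of pairs with $m\le m'$ is $\lceil n/2\rceil$ and with $m<m'$ is $\lfloor n/2\rfloor$, so $(1-q^n)$ occurs to the total power $\lceil n/2\rceil+\lfloor n/2\rfloor=n$, which gives $\prod_{n\ge1}(1-q^n)^n$.

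The main obstacle is bookkeeping rather than any hard estimate: I must verify that the two contraction families exhaust all noncommuting pairs, pin down the exact index ranges ($m'\ge m$ versus $m'>m$) from which each pair arises, and confirm the multiplicity identity $\lceil n/2\rceil+\lfloor n/2\rfloor=n$. A secondary point is justifying manipulation of the infinite operator product, which is guaranteed by $|q|<1$: then $z_m\to0$ and every collected factor converges.
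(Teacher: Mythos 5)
Your proposal is correct and follows essentially the same route as the paper: factor each $\Gamma_i^-$ into its multiplication and differentiation parts, use the scalar contraction $e^{-\xi(\tilde{\partial}_{\bf x},z)}e^{\xi({\bf x},w)}=(1-zw)e^{\xi({\bf x},w)}e^{-\xi(\tilde{\partial}_{\bf x},z)}$ to reach the normal-ordered form $\Gamma_2^-(z)\Gamma_1^-(w)=(1-zw):\Gamma_2^-(z)\Gamma_1^-(w):$, and collapse the correlator against the vacua. In fact your write-up is more complete than the paper's, which ends with ``using this formula step by step,'' whereas you make the step-by-step explicit: the two contraction families with index ranges $m'\ge m$ and $m'>m$, and the multiplicity count $\lceil n/2\rceil+\lfloor n/2\rfloor=n$ for each factor $(1-q^n)$.
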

\begin{proof}
Since
\[
\Gamma_1^-(w)=e^{\xi({\bf x}, w)}e^{-\xi(\tilde{\partial}_{\bf y}, w)},
\]
\[
\Gamma_2^-(z)=e^{\xi({\bf y}, z)}e^{-\xi(\tilde{\partial}_{\bf x}, z)},
\]
and
\[
e^{-\xi(\tilde{\partial}_{\bf x}, z)}e^{\xi({\bf x}, w)}=(1-zw)e^{\xi({\bf x}, w)}e^{-\xi(\tilde{\partial}_{\bf x}, z)}
\]
then we get
\[
\Gamma_2^-(z)\Gamma_1^-(w)=(1-zw):\Gamma_2^-(z)\Gamma_1^-(w):
\]
where the normal order is defined as usual. Using this formula step by step, we get the conclusion.
\end{proof}
\begin{prp}
The A-model topological string partition function on $\C^3$ (the MacMahon function) equals
\be\label{topstr}
Z_{\C^3}^{top}=\langle 0|\prod_{m=1}^\infty \Gamma_2^+(q^{-m+1/2})\Gamma_1^+(q^{-m+1/2})\prod_{m=1}^\infty \Gamma_2^-(q^{m-1/2})\Gamma_1^-(q^{m-1/2})|0\rangle.
\ee
\end{prp}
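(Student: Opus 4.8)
The plan is to reduce the claimed correlator to the one already evaluated in Lemma \ref{lemmazc} by commuting the entire block of $\Gamma^+$-operators rightward, past the block of $\Gamma^-$-operators, until it meets the vacuum. Write
\[
G^+=\prod_{m\ge1}\Gamma_2^+(q^{-m+1/2})\Gamma_1^+(q^{-m+1/2}),\qquad G^-=\prod_{m\ge1}\Gamma_2^-(q^{m-1/2})\Gamma_1^-(q^{m-1/2}),
\]
so the quantity to compute is $\langle 0|G^+G^-|0\rangle$. The decisive structural observation is that the arguments of the $\Gamma^+$'s are exactly the reciprocals of those of the $\Gamma^-$'s, since $q^{-m+1/2}=(q^{m-1/2})^{-1}$.

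First I would record the exchange rules between the plus- and minus-operators. Each $\Gamma_i^\pm$ is the exponential of a linear expression in $x_n,y_n,\partial_{x_n},\partial_{y_n}$, so every commutator of exponents is a scalar and Baker--Campbell--Hausdorff gives a clean rule. A direct computation of $[\xi(\tilde{\partial}_{\bf x},z^{-1}),\xi({\bf x}-\tilde{\partial}_{\bf y},w)]=-\log(1-w/z)$ yields
\[
\Gamma_1^+(z)\Gamma_1^-(w)=\tfrac{1}{1-w/z}\,\Gamma_1^-(w)\Gamma_1^+(z),\qquad \Gamma_2^+(z)\Gamma_2^-(w)=\tfrac{1}{1-w/z}\,\Gamma_2^-(w)\Gamma_2^+(z),
\]
whereas $\Gamma_1^+$ commutes with $\Gamma_2^-$ and $\Gamma_2^+$ commutes with $\Gamma_1^-$, because the corresponding exponents involve only mutually commuting variables and derivatives. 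I would also note that $\Gamma_i^+(z)\cdot 1=1$, since $\Gamma_i^+$ is built purely from derivatives, and hence $G^+|0\rangle=|0\rangle$.

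Next, commuting $G^+$ through $G^-$ produces a scalar prefactor $C$ times $G^-G^+$, each individual exchange being scalar, so that $\langle 0|G^+G^-|0\rangle=C\,\langle 0|G^-G^+|0\rangle=C\,\langle 0|G^-|0\rangle$. Only the like-type exchanges contribute to $C$, and the reciprocal structure makes every factor uniform: moving $\Gamma_1^+(q^{-m'+1/2})$ past $\Gamma_1^-(q^{m-1/2})$ produces $(1-q^{m+m'-1})^{-1}$, and likewise for the type-$2$ pair. Since each $\Gamma_i^+$ must cross every $\Gamma_i^-$ of its own type,
\[
C=\Big(\prod_{m,m'\ge1}\frac{1}{1-q^{m+m'-1}}\Big)^{2}.
\]
Collecting factors by $k=m+m'-1$, for which there are exactly $k$ ordered pairs $(m,m')$, gives $\prod_{m,m'\ge1}(1-q^{m+m'-1})^{-1}=\prod_{k\ge1}(1-q^k)^{-k}=M(q)$, so $C=M(q)^2$. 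Finally Lemma \ref{lemmazc} supplies $\langle 0|G^-|0\rangle=\prod_{n\ge1}(1-q^n)^n=M(q)^{-1}$, whence $\langle 0|G^+G^-|0\rangle=M(q)^2\cdot M(q)^{-1}=M(q)=Z_{\C^3}^{top}$.

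The routine inputs are the BCH exchange rules; the step carrying the real content is the bookkeeping of the prefactor $C$, namely identifying exactly which pairs of operators fail to commute and recognizing the resulting double product $\prod_{m,m'\ge1}(1-q^{m+m'-1})^{-1}$ as the MacMahon function via the count of lattice points on each antidiagonal $m+m'-1=k$. The main thing to watch is that this $M(q)^2$ combines with the $M(q)^{-1}$ of Lemma \ref{lemmazc} to leave precisely one power of $M(q)$; a secondary point is to treat the doubly-infinite products as formal power series in $q$ (valid for $|q|<1$), so that the reorderings and the geometric-series evaluations are justified.
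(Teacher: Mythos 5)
Your proposal is correct and follows essentially the same route as the paper's own proof: both commute the block of $\Gamma^+$-operators through the block of $\Gamma^-$-operators using $\Gamma_i^+(z)\Gamma_i^-(w)=\frac{1}{1-w/z}\Gamma_i^-(w)\Gamma_i^+(z)$ together with the commutativity of the mixed pairs, pick up the scalar $M(q)^2$, and then invoke Lemma \ref{lemmazc} for $\langle 0|G^-|0\rangle=\prod_{n\geq 1}(1-q^n)^n$. Your write-up is in fact more careful than the paper's terse version, since you make explicit the antidiagonal count $\prod_{m,m'\geq 1}(1-q^{m+m'-1})^{-1}=\prod_{k\geq 1}(1-q^k)^{-k}$ and the vacuum identity $G^+|0\rangle=|0\rangle$, both of which the paper leaves implicit.
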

\begin{proof}
Since
\bea
\Gamma_i^+(z)\Gamma_i^-(w)&=&\frac{1}{1-w/z}\Gamma_i^-(w)\Gamma_i^+(z),\quad i=1,2,\nonumber\\
\Gamma_i^+(z)\Gamma_j^-(w)&=&\Gamma_i^j(w)\Gamma_i^+(z),\quad i,j=1,2,\ i\neq j.\nonumber
\eea
Then the right hand side of (\ref{topstr}) equals
\[
\frac{1}{(1-q^n)^n}\frac{1}{(1-q^n)^n}\langle 0|\prod_{m=1}^\infty \Gamma_2^-(q^{m-1/2})\Gamma_1^-(q^{m-1/2})|0\rangle.
\]
By the lemma \ref{lemmazc}, we get the conclusion.

\end{proof}

\section*{Acknowledgements}
The authors gratefully acknowledge the support of Professors Ke Wu, Zi-Feng Yang, Shi-Kun Wang.
Chuanzhong Li is supported by the National Natural Science Foundation
of China under Grant No. 11571192 and K. C. Wong Magna Fund in Ningbo University.

\end{document}